\newtheorem{theorem}{Theorem}[section]
\newtheorem{lemma}[theorem]{Lemma}
\newtheorem{prop}[theorem]{Proposition}
\newtheorem{cor}[theorem]{Corollary}
\newtheorem{example}[theorem]{Example}
\newtheorem{question}[theorem]{Question}
\newtheorem{remark}[theorem]{Remark}
\newcommand{\ben}{\begin{equation*}}
\newcommand{\een}{\end{equation*}}
\newcommand{\F}{\mathbb{F}}
\newcommand{\PP}{\mathbb{P}}
\newcommand{\Spa}{\operatorname{Span}}
\newcommand{\Hull}{\operatorname{Hull}}
\newcommand{\rank}{\operatorname{rank}}
\newcommand{\ev}{\operatorname{ev}}
\newcommand{\GL}{\operatorname{GL}}
\newcommand{\charac}{\operatorname{char}}
\begin{document}

\title{Hulls of Projective Reed-Muller Codes}
	
\author{Nathan Kaplan \\ Department of Mathematics \\ University of California, Irvine, USA \\
		{\tt nckaplan@math.uci.edu} \\ \\
Jon-Lark Kim \\ Department of Mathematics \\ Sogang University, Seoul, Korea \\
		{\tt jlkim@sogang.ac.kr} \\
	}
\date{\today}
	
\maketitle

\begin{abstract}
Projective Reed-Muller codes are constructed from the family of projective hypersurfaces of a fixed degree over a finite field $\F_q$.  We consider the relationship between projective Reed-Muller codes and their duals.  We determine when these codes are self-dual, when they are self-orthogonal, and when they are LCD.  We then show that when $q$ is sufficiently large, the
dimension of the hull of a projective Reed-Muller code is 1 less than the dimension of the code. We determine the dimension of the hull for a wider range of parameters and describe how this leads to a new proof of a recent result of Ruano and San Jos\'e (2024).
\end{abstract}

{\bf{Keywords}}: Projective Reed-Muller codes; Evaluation codes; Hull of a linear code.

{\bf{MSC Classification}}: 11T71; 94B60; 11T06.

\section{Introduction}

Affine Reed-Muller codes are the multivariate analogues of Reed-Solomon codes. Codewords come from evaluating polynomials in $n$ variables at the points of $\F_q^n$, or equivalently, they correspond to affine hypersurfaces.  There is an extensive literature studying these codes, including foundational results of Delsarte, Goethals, and MacWilliams \cite{DelGoeMac}, Kasami, Lin, and Peterson \cite{KasLinPet}, and many others.

This paper deals with projective analogues of these codes, the projective Reed-Muller codes, where codewords correspond to hypersurfaces of a fixed degree $k$ in $\PP^n$ defined over $\F_q$.  These codes also have an extensive history going back to early work of Lachaud who studied codes from hyperplanes and codes from quadric hypersurfaces in \cite{Lac1}, and determined the parameters of these codes in the case where $k < q$ in \cite{Lac2}.  S\o{}rensen determined the parameters of these codes in general, described the dual of a projective Reed-Muller code, and studied when these codes are cyclic \cite{Sor}.  Berger analyzes the automorphism groups of these codes in \cite{Ber}.  More recently, there has been extensive interest in determining the generalized Hamming weights of these codes, and in determining Hamming weight enumerators of these codes in certain cases; see for example \cite{BeeDatGho2, BeeDatGho, Elk, Kap1}.

The main focus of this paper is on studying the hulls of projective Reed-Muller codes.  The \emph{hull of a linear code} $C$ is defined by $\Hull(C) = C \cap C^\perp$.  Three particular cases of interest are when $C$ is \emph{self-dual}, which means $C = C^\perp = \Hull(C)$, when $C$ is \emph{self-orthogonal}, which means $C = \Hull(C)$, and when $C$ is \emph{LCD (linear code with complementary dual)}, which means $\Hull(C)$ is trivial.  In addition to the huge literature of self-dual codes, self-orthogonal codes, and LCD codes (see for example \cite{Kim1, Roe1, Massey1, Massey2}), there have been many recent papers studying hulls of families of codes more generally, some of which is motivated by connections to quantum error-correcting codes.

The CSS construction gives a method for constructing a quantum error-correcting code from a classical linear code, but this construction requires that the classical code be self-orthogonal.  Mathematicians have worked around this self-orthogonality restriction by using \emph{entanglement assistance}.  This technique allows for the construction of a quantum code from a pair of classical codes $C_1, C_2 \subseteq \F_q^n$, and the dimension of the \emph{relative hull of $C_1$ with respect to $C_2$}, $C_1\cap C_2^\perp$, plays an important role in analyzing the parameters.

Ruano and San-Jos\'e have recently determined the dimension of the hulls of projective Reed-Muller codes over the projective plane \cite{RSJ1}.  In fact, they study intersections of pairs of projective Reed-Muller codes over the projective plane more generally.  We will review these contributions in Section \ref{sec:RSJ}.  We give a different proof of one of their results in Section \ref{sec:main2}.  The results of our paper apply not only to projective Reed-Muller codes over the projective plane, but also to projective Reed-Muller codes over higher-dimensional projective spaces.  We also note recent work of Gao, Yue, Huang, and Zhang \cite{GaoYueHuaZha}, and of Chen, Ling, and Liu \cite{CheLinLiu}, on hulls of generalized Reed-Solomon codes.

Ruano and San-Jos\'e investigate applications of their results to the construction of quantum codes with good parameters in \cite{RSJ1}.  They study the construction of quantum codes from projective Reed-Muller codes further in \cite{RSJ2}, which focuses on what they call the \emph{hull variation problem for projective Reed-Muller codes}.  Instead of restricting themselves to the study of $C \cap C^\perp$ where $C$ is a projective Reed-Muller code, or even to $C_1 \cap C_2^\perp$ where $C_1$ and $C_2$ are projective Reed-Muller codes, they consider $C_{1,\ell} \cap C_2^\perp$ where $C_2$ is a projective Reed-Muller code and $C_{1,\ell}$ is a code that is monomially equivalent to a projective Reed-Muller code.  By varying the first code one can vary the dimension of the intersection, which is sometimes desirable for applications.  The idea of varying the intersection by replacing one code with a code monomially equivalent to it also occurs in work of Anderson et al. \cite[Theorem 4.6]{ACMLMRS}, and in work of Guenda et al. \cite[Theorem 2.2]{GGJT}.  We do not consider these hull variation problems or the application to quantum codes in this paper and leave this as a direction for future work.

The main contributions of our paper are as follows.  Building on S\o{}rensen's analysis of the dual of a projective Reed-Muller code, we completely determine when projective Reed-Muller codes are self-dual, when they are self-orthogonal, and when they are LCD.  We then show that when $q$ is sufficiently large, the dimension of the hull of a projective Reed-Muller code is $1$ less than the dimension of the code.  We then determine the dimension of the hull for a wider range of parameters and explain how this leads to a new proof of a recent result of Ruano and San Jos\'e \cite{RSJ1}. We give examples throughout the paper to illustrate our results.

Our paper consists of four sections. Section \ref{sec:prelim} describes preliminaries on projective Reed-Muller codes, the hull of a linear code, and duals of projective Reed-Muller codes. We prove our main results in Sections \ref{sec:main1} and \ref{sec:main2}.  We end the paper by highlighting parameters for which our results do not determine the dimension of the hull, which suggests a natural direction for future work.

\section{Preliminaries}\label{sec:prelim}
For a general reference on coding theory we recommend \cite{Pless1, Macwilliams, Pless2}.

\subsection{Projective Reed-Muller Codes}

 Let $\mathcal P=\{P_1, \dots , P_N \}$ be a subset of points of the affine space $\F_q^n$
and let $\mathcal V$ be a finite subspace of polynomials in $\F_q[x_1, \dots, x_n]$. Consider the evaluation map:
\begin{eqnarray*}
  {\mbox{ev}}_{\mathcal P}: \mathcal V  & \rightarrow & \F_q^N \\
   f & \rightarrow & (f(P_1), \dots, f(P_N)).
\end{eqnarray*}
The {\em evaluation code} $\ev(\mathcal{V},P)$ is the image of this map. The {\em affine Reed-Muller code} of order $k$ and length $q^n$, which we denote by ${\mbox{RM}}_q^A
(k, n)$, comes from choosing $\mathcal V$ to be
$\F_q[x_1, \dots, x_n]_{\le k}$, the vector space of polynomials of degree at most $k$, and $\mathcal P$ to be
the set of all points in the affine space $\F_q^n$. This is sometimes referred to as a \emph{generalized Reed-Muller code} in the literature.  Note that ${\mbox{RM}}_q^A(k, n)$ depends on a choice of the ordering of the points of $\F_q^n$.  Recall that two linear codes over $\F_q$ are \emph{monomially equivalent}~\cite{Pless1} if one code can be obtained from the other by permuting the coordinate positions and multiplying each coordinate by a nonzero element of $\F_q$. It is clear that the affine Reed-Muller codes that arise from different choices are all monomially equivalent.

Let $\mathcal P =\{ P_1, \dots , P_N \}$ be a subset of points in the projective space $\mathbb P^n (\F_q)$.
It does not make sense to evaluate an element of $\F_q[x_0, \dots, x_n]$ at a projective point, so we make a choice of affine representative for each, giving a set $\mathcal P'=\{P_1', \dots, P_N'\}$ with each $P_i' \in \F_q^{n+1} \setminus (0,0,\ldots,0)$.
 Let $\mathcal V$ be a subspace of $\F_q[x_0, \dots, x_n]_k$, the
set of homogeneous polynomials of degree $k$ (including the zero polynomial). We
now define an evaluation code as in the previous paragraph. When $\mathcal P$ consists
of all points in $\mathbb P^n(\F_q)$ and $\mathcal V$ is all of $\F_q[x_0, \dots, x_n]_k$, this construction defines the \emph{projective Reed-Muller code of order $k$ and length} $N = |\mathbb P^n(\F_q)| = (q^{n+1}-1)/(q-1),$
denoted ${\mbox{RM}}_q^P(k, n)$.
In this paper, we write $C_{n,k}^q$ for ${\mbox{RM}}_q^P(k, n)$
and $C_{n,k}^{A,q}$ for ${\mbox{RM}}_q^A(k, n)$. As in the previous paragraph, the projective Reed-Muller codes that arise from different orderings of the points of $\mathcal{P}$ and different choices of affine representatives in $\mathcal{P}'$ are monomially equivalent.  In this paper we follow the convention of Lachaud \cite{Lac1, Lac2} and S\o{}rensen \cite{Sor}, where for each projective point we choose the affine representative for which the left-most nonzero coordinate is equal to $1$.  Throughout the rest of this paper, whenever we write $C_{n,k}^q$ we will use this choice of affine representatives.

To give one basic example, when $k=1$ the projective Reed-Muller code arises from evaluating linear forms on projective space and one can see that $C_{n,1}^q$ is a $q$-ary simplex code.  Lachaud discusses this example, and $C_{n,2}^q$ where codewords come from evaluating quadratic polynomials, in his initial paper on this topic \cite{Lac1}.

Lachaud determined the parameters of the code $C_{n,k}^q$ in the case where $1 \le k < q$.
\begin{lemma}{\rm (\cite[Theorem 2]{Lac2})} \label{lem-parameters}
Assume that $1 \le k < q$. Then the code $C_{n,k}^q$ has parameters
{\rm length} $N=\frac{q^{n+1}-1}{q-1}$, {\rm dimension} $K=\binom{n+k}{k}$,
{\rm distance} $D=(q-k+1)q^{n-1}$.
\end{lemma}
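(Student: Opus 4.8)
The plan is to establish the three parameters separately; the length and dimension are straightforward, and the minimum distance is the substantive part. The length is $N=|\PP^n(\F_q)|=(q^{n+1}-1)/(q-1)$ by definition. For the dimension, note that $\F_q[x_0,\dots,x_n]_k$ has $\F_q$-dimension $\binom{n+k}{k}$, the number of degree-$k$ monomials in $n+1$ variables, so it suffices to show that $\ev$ is injective on this space when $k<q$. If $f\neq 0$ is homogeneous of degree $k$ with $\ev(f)=0$, then since $f(\lambda P)=\lambda^k f(P)$ the vanishing of $f$ at the chosen affine representatives forces $f$ to vanish at every nonzero point of $\F_q^{n+1}$, hence (as $k\geq 1$) at all of $\F_q^{n+1}$; but every variable occurs in $f$ to degree at most $k\leq q-1$, and a polynomial in $\F_q[x_0,\dots,x_n]$ in which each variable has degree at most $q-1$ that vanishes on all of $\F_q^{n+1}$ must be the zero polynomial (an easy induction on the number of variables, using that a univariate polynomial of degree $\leq q-1$ with $q$ roots in $\F_q$ is zero). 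This contradicts $f\neq 0$, so $K=\binom{n+k}{k}$.

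For the distance, write $p_j=|\PP^j(\F_q)|=(q^{j+1}-1)/(q-1)$ for $j\geq 0$ and $p_{-1}=0$, and for a nonzero form $f$ of degree $k$ let $Z(f)\subseteq\PP^n(\F_q)$ denote its set of projective zeros, which is well defined by homogeneity. The weight of $\ev(f)$ equals $N-|Z(f)|$, and $N-(kq^{n-1}+p_{n-2})=(q-k+1)q^{n-1}$, so the distance claim reduces to showing that (i) every nonzero homogeneous $f$ of degree $k\leq q$ has $|Z(f)|\leq kq^{n-1}+p_{n-2}$, and (ii) equality is attained. For (ii) I would take distinct $\alpha_1,\dots,\alpha_k\in\F_q$ (possible since $k<q$) and set $f=\prod_{i=1}^{k}(x_0-\alpha_i x_1)$, a union of $k$ hyperplanes meeting in the common subspace $\{x_0=x_1=0\}\cong\PP^{n-2}$; since a point off this subspace lies on at most one of the hyperplanes, $|Z(f)|=k(p_{n-1}-p_{n-2})+p_{n-2}=kq^{n-1}+p_{n-2}$.

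For (i) I would induct on $n$. The base case $n=1$ is the statement that a nonzero binary form of degree $k$ has at most $k=kq^0+p_{-1}$ roots in $\PP^1$. For $n\geq 2$, distinguish two cases. If some hyperplane $H_0=Z(L_0)$ is contained in $Z(f)$, write $f=L_0g$ with $g$ homogeneous of degree $k-1$; then $Z(f)=H_0\cup Z(g)$, and dehomogenizing $g$ on the affine chart $\{L_0\neq 0\}\cong\F_q^n$ produces a nonzero polynomial of degree at most $k-1<q$, which has at most $(k-1)q^{n-1}$ zeros by the Schwartz--Zippel bound, so $|Z(f)|\leq p_{n-1}+(k-1)q^{n-1}=kq^{n-1}+p_{n-2}$. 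If no hyperplane is contained in $Z(f)$, then $f|_H\not\equiv 0$ for every hyperplane $H\cong\PP^{n-1}$, so the inductive hypothesis gives $|Z(f)\cap H|\leq kq^{n-2}+p_{n-3}$; counting the incident pairs $(P,H)$ with $P\in Z(f)$ two ways --- each point lies on $p_{n-1}$ hyperplanes, and there are $p_n$ hyperplanes --- gives $|Z(f)|\,p_{n-1}\leq p_n(kq^{n-2}+p_{n-3})$, and the identities $p_n=qp_{n-1}+1$ and $qp_{n-3}=p_{n-2}-1$ together with $k\leq q$ show the right side is strictly less than $p_{n-1}(kq^{n-1}+p_{n-2})$, whence $|Z(f)|\leq kq^{n-1}+p_{n-2}$.

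The only real work is part (i): the dichotomy on whether a hyperplane lies inside $Z(f)$ is what makes the induction close, and in the second case one must check the arithmetic to see that the crude double-counting bound already suffices --- this is exactly where $k\leq q$ is used, and the estimate would fail for larger $k$. The remaining pieces --- the value of $N$, the injectivity argument, the standard Schwartz--Zippel bound (a short induction on the number of variables), and the sharpness example --- are routine.
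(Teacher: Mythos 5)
Your argument is correct, but note that the paper does not prove this lemma at all: it is quoted verbatim from Lachaud \cite[Theorem 2]{Lac2}, so there is no internal proof to compare against. What you have written is a self-contained reconstruction of the standard argument. The length and dimension parts are routine and correct (injectivity of $\ev$ on $\F_q[x_0,\dots,x_n]_k$ via the reduced-polynomial fact). The substantive content is your part (i), which is precisely the Serre--Tsfasman bound $|Z(f)|\le kq^{n-1}+p_{n-2}$ for a nonzero form of degree $k\le q$; your hyperplane dichotomy plus double counting is a valid proof of it, and I checked the arithmetic: $p_n(kq^{n-2}+p_{n-3})=p_{n-1}(kq^{n-1}+p_{n-2})-q^{n-2}(q+1-k)$, which is strictly smaller than $p_{n-1}(kq^{n-1}+p_{n-2})$ exactly when $k\le q$, so the crude averaging bound does close the induction. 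The sharpness example (a pencil of $k$ hyperplanes through a common $\PP^{n-2}$) is also correct. The one step you should make explicit is the factorization $f=L_0g$ in the first case: over $\F_q$ you only know that $f$ vanishes at the \emph{rational} points of $H_0$, so divisibility is not an instance of the Nullstellensatz. It does follow from the same fact you already invoked for the dimension count --- after a linear change of coordinates taking $L_0$ to $x_0$, write $f=x_0g+h$ with $h\in\F_q[x_1,\dots,x_n]_k$; then $h$ vanishes on all of $\F_q^n$ and has every individual degree at most $k\le q-1$, hence $h=0$. With that line added, the proof is complete; it is essentially the argument Lachaud himself relies on (he deduces the minimum distance from Serre's bound on points of hypersurfaces of degree at most $q$).
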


If $k \ge n(q-1) +1$, then $C_{n,k}^q$ is trivial, that is, the whole space $\mathbb F_q^N$~\cite[Remark 3]{Sor}.
For $1 \le k \le n(q-1)$, S\o{}rensen computed the parameters of $C_{n,k}^q$.  For a recent discussion of S\o{}rensen's computation of the minimum distance and also for a characterization of the minimal weight codewords of $C_{n,k}^q$, see the paper of Ghorpade and Ludhani \cite{GhoLud}, and also the short note of S\o{}rensen \cite{Sor2}.

\begin{lemma}{\rm(\cite[Theorem 1]{Sor})} \label{lem-parameters-2}
Assume that $1 \le k \le n(q-1)$. Then the code $C_{n,k}^q$ has parameters
\begin{eqnarray*}
{\mbox{{\rm length} }} N & = &\frac{q^{n+1}-1}{q-1}, \\
{\mbox{{\rm dimension} }} K &= &\sum_{\substack{t \equiv k\hspace{-.22cm} \pmod{q-1} \\ 0 < t \le k}} \left(
\sum_{j=0}^{n+1} (-1)^j \binom{n+1}{j} \binom{t-jq+n}{t-jq} \right), \\
{\mbox{{\rm distance} }} D & = & (q-s)q^{n-r-1},
\end{eqnarray*}
where
\[
k-1 = r(q-1) +s,~~ 0 \le s < q-1.
\]
\end{lemma}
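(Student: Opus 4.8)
The plan is to establish the three parameters separately: the length is immediate, the dimension reduces to counting ``reduced'' monomials plus a linear-independence statement, and the minimum distance reduces to a maximal-point-count problem for hypersurfaces. The length is $N=|\PP^n(\F_q)|=(q^{n+1}-1)/(q-1)$ by construction. For the dimension, the first step is a reduction observation: every chosen affine representative $P'=(p_0,\dots,p_n)$ has all $p_i\in\F_q$, so $p_i^q=p_i$, and hence the evaluation on $\mathcal P'$ of a monomial $x_0^{a_0}\cdots x_n^{a_n}$ is unchanged when an exponent $a_i\ge q$ is replaced by $a_i-(q-1)$. Iterating, the evaluation of any degree-$k$ monomial coincides on $\mathcal P'$ with that of a monomial whose exponents all lie in $\{0,\dots,q-1\}$ and whose degree $t$ satisfies $0<t\le k$ and $t\equiv k\pmod{q-1}$. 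Thus $C_{n,k}^q$ is spanned by the evaluations of these reduced monomials, and inclusion--exclusion on the conditions $\{a_i\ge q\}$ shows that their number is exactly $\sum_{t}\sum_{j=0}^{n+1}(-1)^j\binom{n+1}{j}\binom{t-jq+n}{t-jq}$, the claimed formula; it remains to prove the reduced-monomial evaluations are linearly independent.

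For the independence I would induct on $n$ using $\PP^n(\F_q)=U_0\sqcup H_\infty$, where $U_0=\{x_0\ne 0\}\cong\F_q^n$ and $H_\infty=\{x_0=0\}\cong\PP^{n-1}(\F_q)$. Restricting $C_{n,k}^q$ to $U_0$ gives exactly the generalized affine Reed--Muller code $\mathrm{RM}_q^A(k,n)$ (a homogeneous degree-$k$ polynomial dehomogenizes to an arbitrary polynomial of degree $\le k$), whose dimension $\#\{(a_1,\dots,a_n):0\le a_i\le q-1,\ \sum a_i\le k\}$ is classical \cite{DelGoeMac,KasLinPet}; restricting to $H_\infty$ sends the reduced monomials with $a_0=0$ to the reduced monomials of $C_{n-1,k}^q$, which are independent by induction. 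The content is to show these two pieces of information pin down all coefficients, i.e.\ to identify the kernel of restriction-to-$U_0$, and to track which degrees, and which residues modulo $q-1$, survive on each piece so that the recursion closes onto the stated sum; the base case $n=1$, or alternatively the range $k<q$, is handled directly via Lemma~\ref{lem-parameters}. I expect the naive two-term exact sequence not to split off the kernel cleanly, so this step will likely need a finer filtration, or a comparison with $\mathrm{RM}_q^A(k,n+1)$ in which the $\F_q^\times$-scaling action isolates the residue class $k\bmod(q-1)$.

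For the minimum distance, a nonzero codeword is $\ev_{\mathcal P'}(F)$ for some homogeneous $F$ of degree $k$ that does not vanish on all of $\PP^n(\F_q)$, and its weight is $N$ minus the number of $\F_q$-points of the hypersurface $F=0$; hence $D=N-\max_F|Z(F)(\F_q)|$. The upper bound on $D$ is the easy direction: writing $k=r(q-1)+s+1$, I would exhibit an explicit reducible degree-$k$ hypersurface --- $r$ hyperplanes through a common codimension-two flat together with $s+1$ further hyperplanes in a Reed--Solomon-type (parallel) configuration --- whose complement in $\PP^n(\F_q)$ has exactly $(q-s)q^{n-r-1}$ points. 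The lower bound on $D$, equivalently $|Z(F)(\F_q)|\le N-(q-s)q^{n-r-1}$, would be proved by induction on $n$: slice $\PP^n$ by the hyperplanes through a fixed linear subspace, apply the inductive bound to each degree-$k$ hyperplane section, and separate the affine part of the count from the part at infinity so that the estimate ultimately rests on the known minimum distance of an affine Reed--Muller code \cite{KasLinPet}.

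The main obstacle is precisely this hypersurface point-count upper bound in the wrap-around regime $k\ge q$: when $k<q$ a Serre-type argument suffices, but for larger $k$ one must control how many rational hyperplanes a degree-$k$ hypersurface can contain and how the residual hypersurface behaves, and the slicing induction has to be arranged so that equality forces exactly the extremal unions-of-hyperplanes-plus-Reed--Solomon-factor configurations. A secondary difficulty, as noted above, is making the linear-independence step in the dimension count rigorous, since the obvious restriction exact sequence does not by itself exhibit the correct kernel.
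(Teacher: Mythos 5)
The paper does not prove Lemma \ref{lem-parameters-2}; it is quoted from S\o{}rensen \cite[Theorem 1]{Sor}, so there is no internal proof to compare against, and I will judge your proposal against S\o{}rensen's argument and what is known about it. For the length and the dimension your plan is sound and is essentially S\o{}rensen's: the spanning set of reduced monomials of degree $t\equiv k\pmod{q-1}$, $0<t\le k$, is correct, the inclusion--exclusion count of such monomials is exactly the stated inner sum (this is also the description of the dimension that the paper itself invokes in the proof of Proposition \ref{prop-dim_increase}), and your instinct that the naive $U_0\sqcup H_\infty$ restriction sequence will not close on its own --- that one should instead work inside the affine code on $\F_q^{n+1}$ and use the $\F_q^\times$-scaling action to isolate the residue class of $k$ modulo $q-1$ --- is precisely how the independence is established in \cite{Sor}. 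As written, though, this part remains a plan: the kernel identification you defer is the entire content of the dimension statement.

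The genuine gap is the minimum-distance lower bound in the wrap-around regime $k\ge q$, which you correctly identify but do not close. This is not a routine slicing induction: S\o{}rensen's own published argument for exactly this step contains a gap, which is the subject of \cite{Sor2} and of the treatment by Ghorpade and Ludhani \cite{GhoLud}; the difficulty is the one you name, namely bounding how a degree-$k$ hypersurface can meet the rational hyperplanes through a fixed flat once $k$ exceeds $q$, and your proposal gives no mechanism for it. Until that estimate is carried out (for instance along the corrected lines of \cite{GhoLud}), the distance formula is not proved. A smaller but concrete error: the extremal configuration you describe, ``$r$ hyperplanes through a common codimension-two flat together with $s+1$ further hyperplanes,'' has degree $r+s+1$ rather than $k=r(q-1)+s+1$. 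Each of the $r$ pieces must be a full pencil of $q-1$ rational hyperplanes through the common flat, i.e.\ a factor of the form $x_i^{q-1}-x_0^{q-1}$; with that correction the complement does have exactly $(q-s)q^{n-r-1}$ points and the easy direction goes through.
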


We remark that if $k <q$, then $r=0$ and $s=k-1$ so that
\[
D=(q-s)q^{n-r-1}=(q-(k-1))q^{n-1}=(q-k+1)q^{n-1},
\]
which matches the expression for $D$ in Lemma~\ref{lem-parameters}.

A different expression for the dimension of $C_{n,k}^q$ is given by Mercier and Rolland~\cite{MerRol}. See \cite{GhoLud} for a discussion of the equivalence between these two expressions.
\begin{lemma}{\rm (\cite[Theorem 4]{MerRol})} \label{lem-dim-MR}
Assume that $1 \le k \le n(q-1)$. Then $C_{n,k}^q$ has dimension
\[
K = \binom{n+k}{k} - \sum_{j=2}^{n+1} (-1)^j \binom{n+1}{j} \sum_{i=0}^{j-2} \binom{k+(i+1)(q-1)-jq+n}{k+(i+1)(q-1)-jq}.
\]
\end{lemma}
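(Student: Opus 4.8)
The plan is to compute $K=\dim C_{n,k}^q$ as a value of the Hilbert function of the homogeneous coordinate ring of $\PP^n(\F_q)$ and then to read that Hilbert function off from a free resolution. Write $R=\F_q[x_0,\dots,x_n]$ and let $I\subseteq R$ be the homogeneous vanishing ideal of $\PP^n(\F_q)$, that is, the ideal of all forms vanishing at every point. Since a form $f$ of degree $k$ satisfies $f(\lambda P)=\lambda^k f(P)$, whether $f$ vanishes at a projective point does not depend on the choice of affine representative, so the kernel of the evaluation map $R_k\to\F_q^N$ defining $C_{n,k}^q$ is exactly $I_k$; hence
\[
K=\dim_{\F_q}R_k-\dim_{\F_q}I_k=\binom{n+k}{k}-\dim_{\F_q}I_k=\dim_{\F_q}(R/I)_k,
\]
and everything reduces to computing the Hilbert function of $R/I$.

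Next I would identify $I$ as a determinantal ideal. Let $M$ be the $2\times(n+1)$ matrix with rows $(x_0,\dots,x_n)$ and $(x_0^q,\dots,x_n^q)$. A point $[v]\in\PP^n$ is $\F_q$-rational precisely when $v$ is proportional to its Frobenius twist $(v_0^q,\dots,v_n^q)$, i.e.\ precisely when $M$ drops rank there, so $\PP^n(\F_q)$ is cut out set-theoretically by the $\binom{n+1}{2}$ size-two minors $g_{ij}=x_ix_j^q-x_i^qx_j$. The ideal $J=\langle g_{ij}\rangle$ is in fact radical: on each standard affine chart it restricts (using $g_{ij}=x_i(x_j^q-x_j)-x_j(x_i^q-x_i)$) to $\langle x_0^q-x_0,\dots,x_n^q-x_n\rangle$, the radical vanishing ideal of $\F_q^n$ in affine space; and $R/J$ is Cohen--Macaulay of dimension $1$ by determinantal theory, $J$ being the ideal of maximal minors of $M$ in the expected codimension $n$. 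Since $R/J$ then has positive depth it has no $\mathfrak m$-torsion, and one concludes $J=I$.

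Then I would write out the Eagon--Northcott resolution of $R/I$ attached to $M$, tracking the grading carefully because the two rows of $M$ have entries of degrees $1$ and $q$. Its term in homological degree $j-1$ is free of rank $(j-1)\binom{n+1}{j}$, and each $j$-element subset of $\{0,\dots,n\}$ contributes exactly one generator in each degree $f(q-1)+j$ for $f=1,\dots,j-1$; these degrees are the numbers $(q+1)+a+bq$ with $a+b=j-2$, combining the degree $q+1$ of a minor with the divided-power factor of a rank-two module graded by $\{1,q\}$. Taking the alternating sum of $\dim_{\F_q}R_{k-d}=\binom{k-d+n}{k-d}$ over all generators of degree $d$ — with the convention, as in the statement, that a binomial coefficient with negative lower index is zero — yields
\[
\dim_{\F_q}(R/I)_k=\binom{n+k}{k}-\sum_{j=2}^{n+1}(-1)^{j}\binom{n+1}{j}\sum_{f=1}^{j-1}\binom{n+k-j-f(q-1)}{k-j-f(q-1)},
\]
and substituting $i=j-1-f$ rewrites $k-j-f(q-1)$ as $k+(i+1)(q-1)-jq$, giving exactly the stated formula.

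I expect the main obstacle to be pinning down the graded Betti numbers of the Eagon--Northcott complex for a matrix whose two rows have entries of different degrees ($1$ and $q$): getting the generator degrees $f(q-1)+j$ together with their multiplicities $\binom{n+1}{j}$ exactly right is the crux, after which the passage to the formula is only Euler-characteristic bookkeeping. A secondary point needing care is the radicality $J=I$, which is what licenses using the resolution to compute the Hilbert function in \emph{every} degree, not merely asymptotically. A more elementary alternative avoids resolutions altogether: since $I_k=R_k\cap\langle x_0^q-x_0,\dots,x_n^q-x_n\rangle$, the number $K$ equals the number of \emph{reduced} monomials $x^b$ with $0\le b_i\le q-1$, $1\le\sum_i b_i\le k$, and $\sum_i b_i\equiv k\pmod{q-1}$; an inclusion--exclusion on the upper bounds $b_i\le q-1$ recovers S\o{}rensen's expression in Lemma~\ref{lem-parameters-2}, from which the present formula follows by the binomial-coefficient identity discussed in \cite{GhoLud}.
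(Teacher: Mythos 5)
The paper does not prove this lemma; it is quoted directly from Mercier and Rolland \cite[Theorem 4]{MerRol}, so there is no in-paper argument to compare against. Your proposal is, as far as I can check, a correct reconstruction of the standard proof of that cited result, and I find no gap in it. The reduction $K=\dim_{\F_q}(R/I)_k$ is right (homogeneity of degree $k\ge 1$ makes vanishing independent of the affine representative and automatic at the origin), the identification $I=J=\langle x_ix_j^q-x_i^qx_j\rangle$ via radicality on the punctured spectrum plus $\operatorname{depth}(R/J)\ge 1$ is sound, and the Eagon--Northcott bookkeeping is the genuine crux, which you carry out correctly: with $G=R(-1)^{n+1}$ and $F=R\oplus R(q-1)$, the term in homological degree $j-1$ is $(\operatorname{Sym}_{j-2}F)^*\otimes \bigwedge^{j}G\otimes R(-(q-1))$, whose $\binom{n+1}{j}(j-1)$ generators sit in degrees $f(q-1)+j$ with $f=1,\dots,j-1$ and multiplicity $\binom{n+1}{j}$ each (consistent with the minors having degree $q+1$ at $j=2$), and the substitution $i=j-1-f$ turns $k-j-f(q-1)$ into $k+(i+1)(q-1)-jq$ as claimed. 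I spot-checked the resulting formula against the dimensions forced by Theorem \ref{thm-Sor} in several small cases and it agrees. The only items you assert rather than prove are standard: that the expected codimension $n$ is attained (clear, since $V(J)$ is the one-dimensional cone over a finite point set) and hence that the Eagon--Northcott complex is a resolution and $R/J$ is Cohen--Macaulay; and the terminal binomial identity linking this expression to S\o{}rensen's in your alternative elementary route, which you reasonably defer to \cite{GhoLud}.
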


We briefly recall some notation related to polynomials over finite fields.  Let $F(x_0,\ldots, x_n) \in \F_q[x_0,\ldots, x_n]$.  The \emph{reduced form of $F(x_0,\ldots, x_n)$}, denoted by $\overline{F}(x_0,\ldots, x_n)$, is the polynomial we get by replacing any factor $x_j^{t_j}$, where $t_j = a(q-1) + b$ with $0 < b \le q-1$, with $x_j^b$.  That is, $\overline{F}(x_0,\ldots, x_n)$ is the polynomial we get by reducing modulo $x_j^q-x_j$ for each $j$.
\begin{prop}\label{prop-dim_increase}
For $k$ satisfying $1\le k < n(q-1)$, we have $\dim(C_{n,k}^q) < \dim(C_{n,k+1}^q)$.
\end{prop}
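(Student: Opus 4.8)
The goal is to show that the dimension of $C_{n,k}^q$ is strictly increasing in $k$ for $1 \le k < n(q-1)$. The plan is to exhibit an explicit injective linear map, at the level of the underlying spaces of reduced homogeneous polynomials, from the ``code data'' for degree $k$ into that for degree $k+1$, and then to argue that this map cannot be surjective on images by producing a codeword present at degree $k+1$ but not at degree $k$. Recall that $\dim(C_{n,k}^q)$ equals the dimension of the space $R_k$ of reduced homogeneous polynomials of degree $k$ in $\F_q[x_0,\ldots,x_n]$ modulo the kernel of the evaluation map $\ev_{\mathcal{P}'}$; since S\o{}rensen's description identifies this dimension intrinsically (Lemma~\ref{lem-parameters-2}), it suffices to compare these spaces of reduced polynomials directly.

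First I would set up the comparison map. The natural candidate is multiplication by a fixed coordinate, say $x_0 \mapsto$, i.e. $F \mapsto x_0 F$, followed by reduction. The subtlety is that $x_0 \cdot \overline{F}$ need not be reduced (if $x_0$ already appears to the power $q-1$ in a monomial of $\overline F$), so this is not literally a well-defined injection of reduced-polynomial spaces. To handle this cleanly I would instead work with the evaluation picture: let $\mathcal{P}' = \{P_1',\ldots,P_N'\}$ be the chosen affine representatives, where each has left-most nonzero coordinate equal to $1$, and partition $\mathcal{P}$ according to whether the first coordinate of $P_i'$ is $0$ or $1$. One checks that the generator matrix of $C_{n,k}^q$ can be taken with rows indexed by degree-$k$ monomials; multiplying a degree-$k$ form $F$ by $x_0$ multiplies its evaluation vector coordinatewise by the vector $(\text{first coord of }P_i')_i$, which is a $0/1$ vector. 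The key point is that this coordinatewise multiplication is injective on $C_{n,k}^q$ precisely when no nonzero codeword is supported entirely on the hyperplane $x_0 = 0$; and in fact one shows more simply that $\dim(C_{n,k}^q) \le \dim(C_{n,k+1}^q)$ by checking that the map $F \mapsto x_0 F$ induces a well-defined \emph{injective} linear map on the relevant quotient — a nonzero reduced $F$ of degree $k$ with $k < n(q-1)$ cannot have $x_0 F$ evaluate to zero everywhere, since $x_0 F$ is a nonzero reduced polynomial of degree $k+1 \le n(q-1)$ and such polynomials do not lie in the kernel of evaluation (the vanishing ideal of $\mathbb{P}^n(\F_q)$ meets the reduced polynomials of degree $\le n(q-1)$ trivially in the appropriate graded sense). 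This gives $\dim(C_{n,k}^q) \le \dim(C_{n,k+1}^q)$.

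For the strictness, I would exhibit a reduced homogeneous polynomial of degree $k+1$ whose evaluation vector is not in the span of the evaluation vectors of degree-$k$ forms. A clean choice: take $G = x_0^{a_0} x_1^{a_1} \cdots x_n^{a_n}$ with each $a_j \le q-1$ and $\sum a_j = k+1$, chosen so that $G$ is \emph{not} divisible (as a reduced monomial, after the multiplication-and-reduce operation) by any single variable in a way that would place it in the image of the degree-$k$ map. Since $k+1 \le n(q-1)$, such a monomial with all exponents at most $q-1$ exists. Alternatively, and perhaps more transparently, I would compare dimensions using Lemma~\ref{lem-dim-MR} or the S\o{}rensen formula directly: one can argue combinatorially that the stated sum for $K(k+1)$ strictly exceeds that for $K(k)$ in the range $1 \le k < n(q-1)$, by tracking how the binomial-coefficient terms change when $k$ increases by $1$ and checking positivity of the difference. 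I expect the main obstacle to be exactly this: making the ``not in the image'' / strict-inequality step rigorous, since the reduction operation $F \mapsto \overline F$ interacts awkwardly with multiplication by a variable, and a naive dimension count from the formulas requires care with the congruence condition $t \equiv k \pmod{q-1}$ and the alternating inclusion–exclusion sums. The cleanest route is likely the evaluation-theoretic one: reduce to the statement that a nonzero reduced form of degree $d$ with $1 \le d \le n(q-1)$ has nonzero evaluation on $\mathbb{P}^n(\F_q)$ (equivalently, $C_{n,d}^q \ne 0$ with the expected dimension), which is implicit in Lemma~\ref{lem-parameters-2}, and then observe that the multiplication-by-$x_0$ map strictly increases the degree while staying injective and lands in a larger ambient monomial space, with the extra dimension witnessed by a monomial supported away from $x_0$.
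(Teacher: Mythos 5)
Your overall plan --- compare degrees $k$ and $k+1$ via multiplication by $x_0$ and then exhibit something new in degree $k+1$ --- is the same as the paper's, which runs the argument combinatorially on the set $\mathcal{R}_{n,k}^q$ of reduced monomials via $m\mapsto\overline{x_0m}$. However, the step you yourself flag as the main obstacle is a genuine gap: the injectivity you need is false once $k\ge q$. As you note, coordinatewise multiplication of codewords by $\ev(x_0)$ is injective on $C_{n,k}^q$ exactly when no nonzero codeword is supported on the hyperplane $x_0=0$; but such codewords exist. Take $q=3$, $n=2$, $k=3<n(q-1)=4$, and $F=x_1^3-x_0^2x_1$. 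Then $F$ vanishes at every representative $(1,a,b)$, equals $1$ at each $(0,1,b)$, and equals $0$ at $(0,0,1)$, so $\ev(F)\ne 0$; yet $x_0F=x_0x_1^3-x_0^3x_1$ vanishes at every point of $\mathcal{P}'$, so $\ev(x_0F)=0$. (More generally $F=x_1^{k}-x_0^{q-1}x_1^{k-q+1}$ works for any $q\le k<n(q-1)$.) The same example defeats the polynomial-level version of your claim: $\overline{F}=x_1-x_0^2x_1\ne 0$ but $\overline{x_0F}=x_0x_1-x_0x_1=0$, so the assertion that ``$x_0F$ is a nonzero reduced polynomial of degree $k+1$'' is not available. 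The root of the problem is that for $k\ge q$ the code is spanned by reduced monomials of the several degrees $k-j(q-1)$, and multiplication by $x_0$ followed by reduction can identify a degree-$k$ monomial with a lower-degree one (here $x_0^2x_1$ and $\overline{x_1^3}=x_1$ both map to $x_0x_1$).

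Consequently neither the inequality $\dim(C_{n,k}^q)\le\dim(C_{n,k+1}^q)$ nor the strictness can be extracted from your map as written, and your fallback of comparing the dimension formulas is also not routine: writing $\dim(C_{n,k}^q)=\sum_{j}r(k-j(q-1))$ with $r(t)$ the number of reduced monomials of degree $t$ in $x_0,\ldots,x_n$, the individual increments $r(t+1)-r(t)$ change sign because $r$ is symmetric and unimodal about $(n+1)(q-1)/2$; for $q=3$, $n=2$ one has $r(4)-r(3)=-1$ while $r(2)-r(1)=3$, and only the total is positive. A correct proof therefore needs either an injection that genuinely accounts for the interaction of multiplication by $x_0$ with reduction across all the degrees $k-j(q-1)$, or a global (not termwise) counting argument. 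The same care is needed in the paper's monomial-level version: the case analysis given there verifies only that $\overline{x_0m}$ lands in $\mathcal{R}_{n,k+1}^q$, not that distinct reduced monomials have distinct images, and the example above shows they need not.
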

We could give a proof of this using the expression for $\dim(C_{n,k}^q)$ given in Lemma \ref{lem-parameters-2}, or the one given in Lemma \ref{lem-dim-MR}, but because these expressions are complicated in general we give a more combinatorial argument.

\begin{proof}
In the proof of \cite[Theorem 2]{Sor}, S\o{}rensen explains that the dimension of $C_{n,k}^q$ is equal to the number of distinct elements $\overline{m}$ where $m$ is a monomial in $x_0,\ldots, x_n$ of degree $k$.  We call a monomial $m$ \emph{reduced} if all of its exponents are at most $q-1$.  This is equivalent to $\overline{m} = m$.  We see that $\dim(C_{n,k}^q)$ is equal to the number of reduced monomials in $x_0,\ldots, x_n$ of degree $k - j(q-1)$ where $j$ can take any value satisfying $0 \le j < \frac{k}{q-1}$.

Let $\mathcal{R}_{n,k}^q$ denote the set of reduced monomials $\overline{m}$ as $m$ runs through all monomials of degree $k$ in $x_0,\ldots, x_n$.  We will give an injective function $\varphi\colon \mathcal{R}_{n,k}^q \rightarrow \mathcal{R}_{n,k+1}^q$ and then will prove that there exists an element of $\mathcal{R}_{n,k+1}^q\setminus \varphi\left(\mathcal{R}_{n,k}^q\right)$.  This shows that $|\mathcal{R}_{n,k+1}^q| > |\mathcal{R}_{n,k}^q|$, completing the proof.

Define $\varphi\colon \mathcal{R}_{n,k}^q \rightarrow \mathcal{R}_{n,k+1}^q$ by $\varphi(m) = \overline{x_0  m}$.  There are two cases to consider.
\begin{enumerate}
\item First suppose that $x_0  m = \overline{x_0  m}$.  This occurs precisely when the exponent of $x_0$ in $m$ is less than $q-1$.  If $m = \overline{F}$ for some homogeneous polynomial $F$ of degree $k$, then it is clear that $x_0 m = \overline{x_0  F}$.  Therefore, $x_0  m \in \mathcal{R}_{n,k+1}^q$.

\item Now suppose that $x_0  m \neq \overline{x_0 m}$.  This occurs precisely when the exponent of $x_0$ in $m$ is $q-1$.  In this case, $\overline{x_0 m} = \frac{x_0 m}{x_0^{q-1}}$ and $\frac{x_0 m}{x_0^{q-1}}$ is reduced.  As in the previous case, if $m = \overline{F}$ for  some homogeneous polynomial $F$ of degree $k$, then it is clear that $\overline{x_0 m} = \overline{x_0  F}$.  Therefore, $\overline{x_0  m} \in \mathcal{R}_{n,k+1}^q$.
\end{enumerate}
We have shown that $\varphi\colon \mathcal{R}_{n,k}^q \rightarrow \mathcal{R}_{n,k+1}^q$ is injective.  We now show that there exists an element of $\mathcal{R}_{n,k+1}^q\setminus \varphi\left(\mathcal{R}_{n,k}^q\right)$.

The definition of $\varphi$ implies that the exponent of $x_0$ is nonzero in every monomial in $\varphi\left(\mathcal{R}_{n,k}^q\right)$.  Since $k+1 \le n(q-1)$, there is at least one monomial $x_0^0x_1^{a_1}\cdots x_n^{a_n}$ for which $\sum_{i=1}^n a_i = k+1$ and $0 \le a_i \le q-1$ for each $i$.  That is, there is at least one reduced monomial of degree $k+1$ in which the exponent of $x_0$ is $0$.  This is an element of $\mathcal{R}_{n,k+1}^q \setminus \varphi\left(\mathcal{R}_{n,k}^q\right)$.
\end{proof}

The following fact follows immediately.  We will apply it in our proof of Theorem \ref{thm-so-1}.
\begin{cor}\label{cor-dim_diff}
Suppose $a$ and $b$ are integers satisfying $1 \le a < b \le n(q-1)$.  Then $\dim(C_{n,b}^q) - \dim(C_{n,a}^q) \ge b-a$.
\end{cor}

\subsection{The Hull of a Linear Code}
Suppose $x = (x_1,\ldots, x_N)$ and $y = (y_1,\ldots, y_N)$ are elements of $\F_q^N$. Let $\langle x,y \rangle = \sum_{i=1}^N x_i y_i$.  The \emph{dual} of a linear code $C \subseteq \F_q^N$ is
\[
C^\perp = \{y \in \F_q^N \colon \langle y,c\rangle = 0\ \text{ for all } c\in C\}.
\]
The \emph{hull} of a linear code $C$ is defined by $\Hull(C) = C\cap C^\perp$.  A linear code is \emph{self-dual} if $C = C^\perp$ and is \emph{self-orthogonal} if $C \subseteq C^\perp$.  A $K \times N$ matrix $G$ over $\F_q$ is a \emph{generator matrix} for a linear code $C\subset\F_q^N$ of dimension $K$ if the rows of $G$ form a basis for $C$.  We recall a basic fact about the dimension of the hull of a linear code $C$.
\begin{prop}{\rm(\cite[Proposition 3.1]{GueJitGul})} \label{prop-hull}
Let $C \subseteq \F_q^N$ be a linear code of dimension $K$ and let $G$ be a generator matrix for $C$.  Then $\rank(G G^T) = K - \dim(\Hull(C))$.

In particular, $C$ is self-orthogonal if and only if $GG^T$ is the zero matrix and $C$ is LCD if and only if $GG^T$ is invertible.
\end{prop}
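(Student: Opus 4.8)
The plan is to realize $\Hull(C)$ as the image, under the encoding map, of the kernel of $GG^T$, and then apply the rank-nullity theorem. Write the rows of $G$ as $r_1,\ldots,r_K \in \F_q^N$; since these form a basis of $C$, the linear map $\psi\colon \F_q^K \to \F_q^N$ given by $\psi(x) = xG$ is injective with image $C$.

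First I would characterize membership in $\Hull(C)$. A codeword $c = \psi(x) = xG$ lies in $C^\perp$ precisely when $\langle c, r_i\rangle = 0$ for every $i$, which is exactly the statement that $G c^T = 0$, i.e.\ $GG^T x^T = 0$. Hence $\psi(x) \in \Hull(C)$ if and only if $x^T \in \ker(GG^T)$, where $GG^T$ is regarded as acting on column vectors in $\F_q^K$. Since $\psi$ is an isomorphism onto $C$, it restricts to an isomorphism from $\{x \in \F_q^K \colon GG^T x^T = 0\}$ onto $\Hull(C)$, so $\dim(\Hull(C)) = \dim\ker(GG^T) = K - \rank(GG^T)$ by rank-nullity. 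Rearranging gives the claimed identity $\rank(GG^T) = K - \dim(\Hull(C))$.

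The two special cases follow immediately. The code $C$ is self-orthogonal iff $C \subseteq C^\perp$ iff $\Hull(C) = C$ iff $\dim(\Hull(C)) = K$ iff $\rank(GG^T) = 0$ iff $GG^T$ is the zero matrix. Similarly, $C$ is LCD iff $\Hull(C) = \{0\}$ iff $\dim(\Hull(C)) = 0$ iff the $K \times K$ matrix $GG^T$ has rank $K$, i.e.\ is invertible.

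There is no genuine obstacle here: the argument is elementary linear algebra, and the only point requiring care is consistent bookkeeping between the row-vector convention used for codewords and the column-vector convention natural for describing $\ker(GG^T)$, together with the placement of transposes in the identity $\langle xG, r_i\rangle = (GG^T x^T)_i$.
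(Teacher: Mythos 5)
Your argument is correct and complete: identifying $\Hull(C)$ with the image under the (injective) encoding map $x\mapsto xG$ of $\ker(GG^T)$, and then applying rank--nullity, is exactly the standard proof, and your bookkeeping of transposes in $\langle xG, r_i\rangle = (GG^Tx^T)_i$ is right. Note that the paper does not prove this proposition at all --- it simply cites \cite[Proposition 3.1]{GueJitGul} --- so your write-up supplies a self-contained justification consistent with how the result is used later (reading off $\dim(\Hull(C_{n,k}^q))$ from the rank of $GG^T$ for an explicit monomial basis).
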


There is a result like this one that can be used to determine the dimension of $C_1 \cap C_2^\perp$ when $C_1, C_2 \subseteq \F_q^N$ are linear codes that are not necessarily equal \cite[Theorem 2.1]{GGJT}.  We will not apply this result in this paper, but at the end of Section \ref{sec:main2} when we discuss further directions for our work we will return to the role that this theorem would play in our strategy.

\subsection{Duals of Projective Reed-Muller Codes}

In order to study hulls of projective Reed-Muller codes, we need to understand the dual of a projective Reed-Muller code.  It is well known that the dual of an affine Reed-Muller code is also an affine Reed-Muller code.  More precisely, $C_{n,k}^{A,q}\ ^\perp = C_{n,n(q-1)-k-1}^{A,q}$; see for example \cite{DelGoeMac}.  Since $\F_q[x_1, \dots, x_n]_{\le k} \subseteq \F_q[x_1, \dots, x_n]_{\le k+1}$, it is clear that if $\ell = n(q-1)-k-1$ and $k' = \min(k,\ell)$ then $C_{n,k'}^{q,A}$ is self-orthogonal.  Therefore, in all cases we can determine~$\Hull(C_{n,k}^{A,q})$.

The projective analogue of this situation is more complicated since it is not always the case that the dual of a projective Reed-Muller code is a projective Reed-Muller code. S\o{}rensen computes the dual of a projective Reed-Muller code in \cite{Sor}.  Throughout this paper, we write  ${\bf 1}=(1,1, \dots, 1)$.  We recall the convention that $C_{n,0}^q =\mbox{Span}_{\F_q}  \{ {\bf 1}\}$, since the homogeneous polynomials of degree $0$ in $x_0,\ldots, x_n$ are the constant functions.
\begin{theorem}{\rm (\cite[Theorem 2]{Sor})} \label{thm-Sor} Let $k$ be an integer satisfying $1 \le k \le n(q-1)$ and let $\ell =n(q-1)-k$. Then
\begin{enumerate}
\item [{(i)}] ${C_{n,k}^q}^{\perp} = C_{n,\ell}^q$ for $k \not \equiv 0 \pmod{q-1}$.
\item [{(ii)}] ${C_{n,k}^q}^{\perp} = {{\Spa}}_{\F_q}  \{ {\bf 1}, C_{n,\ell}^q \}$ for $k  \equiv 0 \pmod{q-1}$.
\end{enumerate}
\end{theorem}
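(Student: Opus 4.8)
The plan is to prove each of the two equalities in two stages: first the inclusion ``$\supseteq$'' via a character-sum identity over $\F_q^{n+1}$, then equality by a dimension count that uses the dimension formula of Lemma~\ref{lem-parameters-2} recast through the reduced-monomial count appearing in the proof of Proposition~\ref{prop-dim_increase}. Write $N=(q^{n+1}-1)/(q-1)$, $p=\charac(\F_q)$, and $\ell=n(q-1)-k$.

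First I would establish orthogonality. For $F$ homogeneous of degree $a\ge 1$ and $G$ homogeneous of degree $b\ge 0$, partitioning the $q^{n+1}-1$ nonzero vectors of $\F_q^{n+1}$ by the projective point they represent --- the class of a point $P$, with chosen affine representative $P'$, being the $q-1$ vectors $\lambda P'$ on which $FG$ takes the value $\lambda^{a+b}F(P')G(P')$ --- yields
\[
\sum_{v\in\F_q^{n+1}\setminus\{0\}}F(v)G(v)=\left(\sum_{\lambda\in\F_q^{\times}}\lambda^{a+b}\right)\sum_{P\in\PP^n(\F_q)}F(P')G(P').
\]
Since $\sum_{\lambda\in\F_q^{\times}}\lambda^m$ is $-1$ when $(q-1)\mid m$ and $0$ otherwise, while $\sum_{v\in\F_q^{n+1}}v_0^{e_0}\cdots v_n^{e_n}$ is nonzero only when every $e_i$ is a positive multiple of $q-1$ (which forces total degree at least $(n+1)(q-1)$), the cases $(a,b)=(k,\ell)$ with $\ell\ge 1$, and $(a,b)=(m,0)$ with $G\equiv 1$, $(q-1)\mid m$, $1\le m\le n(q-1)$, give respectively $C_{n,\ell}^q\subseteq {C_{n,k}^q}^{\perp}$ and ${\bf 1}\in {C_{n,m}^q}^{\perp}$ --- here $F(0)=0$ and $a+b\le n(q-1)<(n+1)(q-1)$ make the left-hand side vanish. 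Moreover $\langle{\bf 1},{\bf 1}\rangle=N\equiv 1\pmod{p}$ is nonzero in $\F_q$, so the second conclusion also forces ${\bf 1}\notin C_{n,m}^q$ for every such $m$; in particular, since $(q-1)\mid\ell$ whenever $(q-1)\mid k$, we get ${\bf 1}\notin C_{n,\ell}^q$ in case (ii) when $\ell\ge 1$.

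Next I would count dimensions. Let $A_t$ be the number of monomials of degree $t$ in $x_0,\dots,x_n$ all of whose exponents are at most $q-1$, so that $\sum_t A_tx^t=(1+x+\cdots+x^{q-1})^{n+1}$ and hence $A_t=A_{(n+1)(q-1)-t}$. By the proof of Proposition~\ref{prop-dim_increase}, $\dim C_{n,k}^q=\sum A_t$ with $t$ ranging over $0<t\le k$, $t\equiv k\pmod{q-1}$. Reflecting the corresponding sum for $\dim C_{n,\ell}^q$ through the symmetry of the $A_t$ gives $\dim C_{n,k}^q+\dim C_{n,\ell}^q=\sum A_t$ over $0<t<(n+1)(q-1)$ with $t\equiv k\pmod{q-1}$; and a roots-of-unity filter applied to $(1+x+\cdots+x^{q-1})^{n+1}$ --- which equals $1$ at every nontrivial $(q-1)$-th root of unity --- evaluates $\sum_{t\equiv k}A_t$ over all $t\ge 0$ to $N$ if $(q-1)\nmid k$ and to $N+1$ if $(q-1)\mid k$. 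Subtracting the boundary terms $A_0=A_{(n+1)(q-1)}=1$, which contribute only in the divisible case, leaves $\dim C_{n,k}^q+\dim C_{n,\ell}^q$ equal to $N$ if $(q-1)\nmid k$ and to $N-1$ if $(q-1)\mid k$ and $\ell\ge 1$. The same calculation gives $\dim C_{n,n(q-1)}^q=N-1$, which handles the remaining case $\ell=0$.

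Finally I would combine these facts with $\dim {C_{n,k}^q}^{\perp}=N-\dim C_{n,k}^q$. In case (i), $\dim {C_{n,k}^q}^{\perp}=\dim C_{n,\ell}^q$ and $C_{n,\ell}^q\subseteq {C_{n,k}^q}^{\perp}$, so the two coincide. In case (ii) with $\ell\ge 1$, $\Spa_{\F_q}\{{\bf 1},C_{n,\ell}^q\}$ has dimension $\dim C_{n,\ell}^q+1$ (because ${\bf 1}\notin C_{n,\ell}^q$), which equals $N-\dim C_{n,k}^q=\dim {C_{n,k}^q}^{\perp}$, and it sits inside the dual, so it is the dual. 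In case (ii) with $\ell=0$ (that is, $k=n(q-1)$), $\dim {C_{n,k}^q}^{\perp}=N-(N-1)=1$ and ${\bf 1}\in {C_{n,k}^q}^{\perp}$, so ${C_{n,k}^q}^{\perp}=\Spa_{\F_q}\{{\bf 1}\}=\Spa_{\F_q}\{{\bf 1},C_{n,0}^q\}$. I expect the dimension bookkeeping to be the main obstacle: the closed-form expressions for $\dim C_{n,k}^q$ are unwieldy, so the clean cancellation against $\dim C_{n,\ell}^q$ only becomes visible after passing to the reduced-monomial generating function and the roots-of-unity filter. By contrast the character-sum inclusions are routine, and the extra copy of ${\bf 1}$ in case (ii) is disposed of cheaply once one observes that $\langle{\bf 1},{\bf 1}\rangle$ is nonzero in $\F_q$.
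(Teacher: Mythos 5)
The paper does not prove Theorem \ref{thm-Sor}: it is quoted verbatim from S\o{}rensen \cite[Theorem 2]{Sor}, so there is no in-paper argument to compare yours against. Judged on its own, your proof is correct and complete, and it is essentially the classical route. The orthogonality step is sound: partitioning $\F_q^{n+1}\setminus\{0\}$ into the $(q-1)$-element scaling classes of the affine representatives gives $\sum_{v\neq 0}F(v)G(v)=\bigl(\sum_{\lambda\in\F_q^{\times}}\lambda^{a+b}\bigr)\sum_{P}F(P')G(P')$, the left side vanishes because every monomial of $FG$ has total degree $a+b\le n(q-1)<(n+1)(q-1)$ and so cannot have all $n+1$ exponents positive multiples of $q-1$, and the scalar factor $\sum_{\lambda}\lambda^{a+b}=-1$ is nonzero precisely in the cases you need ($a+b=n(q-1)$, and $a+b=m$ with $(q-1)\mid m$). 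The dimension bookkeeping also checks out: the inner sum in Lemma \ref{lem-parameters-2} is exactly the inclusion--exclusion count $A_t$ of reduced monomials of degree $t$, the reflection $A_t=A_{(n+1)(q-1)-t}$ splices the two dimension sums into a single sum over $0<t<(n+1)(q-1)$ with $t\equiv k\pmod{q-1}$ (the two ranges are disjoint and exhaustive because consecutive admissible degrees differ by $q-1$), and the roots-of-unity filter applied to $(1+x+\cdots+x^{q-1})^{n+1}$, whose value at each nontrivial $(q-1)$-th root of unity is $1$, yields $\dim C_{n,k}^q+\dim C_{n,\ell}^q=N$ or $N-1$ in the two congruence cases, with the $\ell=0$ case handled separately. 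Your disposal of $\bf 1$ via $\langle{\bf 1},{\bf 1}\rangle=N\equiv 1\pmod{\charac(\F_q)}$ is exactly the argument the paper does spell out in Remark \ref{rem-no1}. One small point worth making explicit if you write this up: cite the reduced-monomial description of $\dim C_{n,k}^q$ through Lemma \ref{lem-parameters-2} (S\o{}rensen's Theorem 1) rather than through the proof of his Theorem 2, so that the dimension input is visibly independent of the duality statement being proved.
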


\begin{remark}\label{rem-no1}{\rm 
Suppose $k$ is an integer satisfying $1 \le k < n(q-1)$ and $k \equiv 0 \pmod{q-1}$.  Let $\ell = n(q-1)-k$.  Theorem \ref{thm-Sor} implies that ${C_{n,k}^q}^\perp = {\mbox{Span}}_{\F_q}  \{ {\bf 1}, C_{n,\ell}^q \}$.  In these cases ${\bf 1} \not\in C_{n,\ell}^q$, so we have $C_{n,\ell}^q \subsetneq {C_{n,k}^q}^{\perp} $.  S\o{}rensen does not explain why ${\bf 1} \not\in C_{n,\ell}^q$, so we give a quick proof of this fact.

Suppose ${\bf 1} \in C_{n,\ell}^q$.  Note that $\ell \equiv -k \equiv 0 \pmod{q-1}$ and that $\ell > 0$. At the end of the proof of Theorem \ref{thm-Sor}, S\o{}rensen shows that ${\bf 1} \in {C_{n,\ell}^q}^\perp$.  If $m$ is a positive integer we have $m \cdot 1 = 0$ in $\F_q$ if and only if the characteristic of $\F_q$ divides $m$.  However,
\[
\langle {\bf 1}, {\bf 1}\rangle =\frac{q^{n+1}-1}{q-1} \cdot 1 = (q^n + q^{n-1}+\cdots + q + 1)\cdot 1.
\]
Since $q$ is a power of $\charac(\F_q)$ it is clear that $q^n + q^{n-1}+\cdots + q + 1\equiv 1 \pmod{\charac(\F_q)}$.  This implies $\langle {\bf 1}, {\bf 1}\rangle \neq 0$ in $\F_q$, contradicting the fact that ${\bf 1} \in {C_{n,\ell}^q}^\perp$.  

Now suppose $k = n(q-1)$.  Since ${C_{n,n(q-1)}^q}^\perp = {\mbox{Span}}_{\F_q}  \{ {\bf 1}\}$, the argument in the previous paragraph shows that ${\bf 1} \not\in C_{n,n(q-1)}^q$. This observation implies that $\Hull(C_{n,n(q-1)}^q)$ is trivial, or equivalently, that $C_{n,n(q-1)}^q$ is an LCD code. }
\end{remark}

\subsection{Hulls of Projective Reed-Muller Codes over the Projective Plane}\label{sec:RSJ}

Ruano and San-Jos\'e study projective Reed-Muller codes over the projective plane in \cite{RSJ1}.  That is, they focus on the $n=2$ case of our general study of the codes $C_{n,k}^q$.  Ruano and San-Jos\'e give a basis for $\Hull(C_{2,k}^q)$ for any $1 \le k \le q-1$ in the following form.
\begin{theorem}\cite[Corollary 3.12]{RSJ1}\label{RSJ312}
Let $1\le k \le q-1$ and let $Y = \{0,1,\ldots, \min\{k-1,q-1-k-1\}\}$.
\begin{itemize}
\item If $2k \equiv 0 \pmod{q-1}$, then $C_{2,k}^q \cap {C_{2,k}^q}^\perp = C_{2,k}^q$.
\item If $2k \not\equiv 0 \pmod{q-1}$, a basis for $C_{2,k}^q \cap {C_{2,k}^q}^\perp$ is given by
\[
A_1^k \cup \left( \bigcup_{a_2 \in Y} x_1^{k-a_2} x_2^{a_2} \right),
\]
where
\[
A_1^k = \{x_0^{a_0} x_1^{a_1} x_2^{a_2} \colon a_0 > 0,\ a_0 + a_1 + a_2 = k,\ 0 \le a_1,a_2 \le q-1\}.
\]
\end{itemize}
Consequently,
\[
\dim(\Hull(C_{2,k}^q)) = \binom{k+1}{2} + \min\{k,q-k-1\}.
\]
\end{theorem}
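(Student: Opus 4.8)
The plan is to compute $\Hull(C_{2,k}^q)$ directly from a generator matrix via Proposition~\ref{prop-hull}. Since $1\le k\le q-1$, no monomial of degree $k$ in $x_0,x_1,x_2$ has an exponent exceeding $q-1$, so by Lemma~\ref{lem-parameters} there are exactly $\dim C_{2,k}^q=\binom{k+2}{2}$ of them and they form a basis of $C_{2,k}^q$. Let $G$ be the corresponding generator matrix, with rows $\ev(m)$ indexed by triples $(a_0,a_1,a_2)$ with $a_0+a_1+a_2=k$. Then $(GG^T)_{m,m'}=\sum_{P\in\mathbb{P}^2(\mathbb{F}_q)}(mm')(P)=:S(mm')$ depends only on the degree-$2k$ product monomial $mm'$ (the sum being over Lachaud's chosen affine representatives). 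Partitioning $\mathbb{P}^2(\mathbb{F}_q)$ according to the position of the leftmost nonzero coordinate into $\{[1:a:b]\}$, $\{[0:1:b]\}$, $\{[0:0:1]\}$ and using $\sum_{t\in\mathbb{F}_q}t^j=-\delta(j)$, where $\delta(j)=1$ precisely when $j$ is a positive multiple of $q-1$ (and $[\,\cdot\,]$ below denotes the indicator of a condition), one obtains
\[
S\bigl(x_0^{b_0}x_1^{b_1}x_2^{b_2}\bigr)=\delta(b_1)\delta(b_2)-[b_0=0]\,\delta(b_2)+[b_0=b_1=0];
\]
since $b_1+b_2\le 2k\le 2(q-1)$, the first term is nonzero only if $b_1=b_2=q-1$, forcing $2k=2(q-1)$ and $b_0=0$.

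If $2k\equiv 0\pmod{q-1}$, then $2k$ is $q-1$ or $2(q-1)$, and substituting either value into the formula for $S$ the three terms cancel, so $S$ vanishes on every degree-$2k$ monomial; hence $GG^T=0$, which by Proposition~\ref{prop-hull} means $C_{2,k}^q$ is self-orthogonal, so $\Hull(C_{2,k}^q)=C_{2,k}^q$, which is the first bullet. (In this case $\dim\Hull(C_{2,k}^q)=\binom{k+2}{2}$; the formula in the last line of the statement is the consequence of the second bullet. Alternatively, for $k=q-1$ self-orthogonality is immediate from Theorem~\ref{thm-Sor}, since $C_{2,q-1}^q\subseteq\Spa_{\F_q}\{\mathbf 1,C_{2,q-1}^q\}={C_{2,q-1}^q}^\perp$.)

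Now assume $2k\not\equiv 0\pmod{q-1}$; then $k<q-1$ and $2k\ne q-1$, so $\delta(b_1)\delta(b_2)=0$ always and $S(x_0^{b_0}x_1^{b_1}x_2^{b_2})=[b_0=b_1=0]-[b_0=0]\,\delta(b_2)$. Writing $b_i=a_i+c_i$ for row monomial $x_0^{a_0}x_1^{a_1}x_2^{a_2}$ and column monomial $x_0^{c_0}x_1^{c_1}x_2^{c_2}$, this vanishes unless $a_0=c_0=0$, so $GG^T$ is block diagonal: zero on the $\binom{k+1}{2}$ monomials with $a_0>0$, and on the $k+1$ monomials $x_1^{k-a_2}x_2^{a_2}$, indexed by $a_2\in\{0,\dots,k\}$, equal to the matrix $B\in\F_q^{(k+1)\times(k+1)}$ with $B_{a_2,c_2}=[a_2=c_2=k]-[a_2+c_2=q-1]$ (here $b_2=a_2+c_2\le 2k<2(q-1)$ forces $\delta(b_2)=[b_2=q-1]$). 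Since $v\mapsto vG$ is injective and carries $\ker(GG^T)=\mathbb{F}_q^{\{m:a_0>0\}}\oplus\ker B$ onto $\Hull(C_{2,k}^q)$, we obtain
\[
\Hull(C_{2,k}^q)=\Spa_{\F_q}\{\ev(m):m\in A_1^k\}\,\oplus\,\{\,vG:v\in\ker B\,\},
\]
so it suffices to prove that $\{e_{a_2}:a_2\in Y\}$ is a basis of $\ker B$: then the monomials $x_1^{k-a_2}x_2^{a_2}$ with $a_2\in Y$ complete $A_1^k$ to a basis of the hull, linear independence being automatic because distinct degree-$k$ monomials have linearly independent evaluations.

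That $e_{a_2}\in\ker B$ for $a_2\in Y$ is clear: then $a_2\le k-1$ and $a_2\le q-k-2<q-1-k$, so column $a_2$ of $B$ is zero. For the reverse inclusion I would split according to whether $2k<q-1$ or $2k>q-1$ — equivalently whether $\min\{k,q-k-1\}$ is $k$ or $q-k-1$. If $2k<q-1$, then $a_2+c_2<q-1$ always, so $B=e_ke_k^T$ has rank $1$ and $\ker B=\Spa_{\F_q}\{e_0,\dots,e_{k-1}\}$, while here $Y=\{0,\dots,k-1\}$. If $2k>q-1$, set $A=\bigl([a_2+c_2=q-1]\bigr)_{a_2,c_2}$, so $B=e_ke_k^T-A$; one computes $A^2=\operatorname{diag}\bigl([a_2\ge q-1-k]\bigr)$, so $A$ restricts to an involution on $W:=\Spa_{\F_q}\{e_j:q-1-k\le j\le k\}$ and vanishes on the complementary coordinate subspace $W':=\Spa_{\F_q}\{e_j:0\le j\le q-k-2\}$; since $e_ke_k^T$ also vanishes on $W'$ we get $W'\subseteq\ker B$, and a short argument using that $A|_W$ is an involution together with $2k\ne q-1$ shows $\ker B\cap W=0$, whence $\ker B=W'=\Spa_{\F_q}\{e_j:j\in Y\}$. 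In either case $\dim\ker B=|Y|=\min\{k,q-k-1\}$, giving the claimed basis and $\dim\Hull(C_{2,k}^q)=\binom{k+1}{2}+\min\{k,q-k-1\}$. The one step needing real care is the last one, $\ker B=W'$ when $2k>q-1$: one must rule out kernel vectors lying in $W$, and this is exactly where the involution structure of the anti-diagonal matrix $A$ and the hypothesis $2k\ne q-1$ enter.
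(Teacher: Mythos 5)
Your proposal is correct and follows essentially the same route as the paper: both compute the Gram matrix $GG^T$ of the monomial generator matrix by stratifying $\PP^2(\F_q)$ according to the position of the leftmost nonzero coordinate, applying the power-sum identity \eqref{eq-sum}, and reading off the hull from the zero rows of $GG^T$. The only difference is organizational --- the paper deduces the $n=2$ statement from three general-$n$ results (Theorems \ref{thm-so-1}, \ref{thm-q_large}, and \ref{thm-k_to_qm1}), whereas you run one unified computation for $n=2$ and determine $\ker B$ via the involution structure of the antidiagonal block rather than by computing the rank of the nonzero submatrix; the substance is identical.
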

For any $k$ satisfying $1 \le k < 2(q-1)$ except $k = q-1$, Theorem \ref{thm-Sor}(i) implies that ${C_{2,k}^q}^\perp = C_{2,2(q-1)-k}^q$. If $k$ satisfies $q-1 < k < 2(q-1)$, then ${C_{2,k}^q}^\perp = C_{2,\ell}^q$ where $\ell = 2(q-1)-k$ satisfies $1 \le \ell < q-1$, and $\Hull(C_{2,k}^q) = \Hull(C_{2,\ell}^q)$.  In this way, we see that Theorem \ref{RSJ312} gives a basis for $\Hull(C_{2,k}^q)$ for all $k$ satisfying $1 \le k \le 2(q-1)$.

Theorem \ref{RSJ312} follows from the results we prove in Section \ref{sec:main2}.  We emphasize that our proof is quite different than the one given by Ruano and San-Jos\'e.  A main idea in \cite{RSJ1} is to view codewords of projective Reed-Muller codes as classes of polynomials in a quotient ring.  They then compute bases of some subspaces of quotient rings associated to projective Reed-Muller codes by using techniques from commutative algebra.  In particular, the authors work with a particular basis of the vanishing ideal of the collection of points of $\PP^2(\F_q)$.  Our proofs work directly with the linear algebraic setup of Proposition \ref{prop-hull}, taking sums of polynomials evaluated at the standard affine representatives of the points of $\PP^n(\F_q)$ to determine the rank of the relevant matrix.

We note that Ruano and San-Jos\'e actually consider the more general problem of determining a basis for $C_{2,k_1}^q \cap C_{2,k_2}^q$ where $1 \le k_1, k_2, \le 2(q-1)$. Theorem \ref{RSJ312} comes from specializing to the case where $k_2 = 2(q-1)-k_1$ \cite[Corollary 3.11]{RSJ1}.  In Section \ref{sec:main2} we discuss how one could adapt our techniques to ammept to give a different proof of this more general statement about projective Reed-Muller codes with $n=2$.

\section{Special Classes of Projective Reed-Muller Codes}\label{sec:main1}

Our goal is to understand the following question.
\begin{question}\label{Q1}
\begin{enumerate}
\item For which parameters $(n,k,q)$ is the projective Reed-Muller code $C_{n,k}^q$ self-dual?

\item For which parameters $(n,k,q)$ is $C_{n,k}^q$ self-orthogonal?

\item For which parameters $(n,k,q)$ is $C_{n,k}^q$ LCD?

\item In cases where $C_{n,k}^q$ is not self-orthogonal, what can we say about $\Hull(C_{n,k}^q)$?
\end{enumerate}
\end{question}
\noindent In this section we answer the first three questions.  We consider the last one in Section \ref{sec:main2}.

We begin this section by applying Theorem \ref{thm-Sor} to answer the first part of Question \ref{Q1} and determine when a projective Reed-Muller code is self-dual.
\begin{theorem}\label{thm-proj-sd}
Let $1\le k \le n(q-1)$.  Then $C^q_{n,k}$ is self-dual if and only if $q$ and $n$ are odd and $k = \frac{n(q-1)}{2}$.
\end{theorem}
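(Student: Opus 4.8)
The plan is to use Theorem~\ref{thm-Sor} to reduce self-duality to a pair of conditions: that the relevant degrees match, and that the dimension comes out to exactly half the length. First I would dispense with the congruence issue. If $k \not\equiv 0 \pmod{q-1}$, then ${C_{n,k}^q}^\perp = C_{n,\ell}^q$ with $\ell = n(q-1)-k$, so $C_{n,k}^q$ being self-dual forces $C_{n,k}^q = C_{n,\ell}^q$. Since both are projective Reed-Muller codes and by Proposition~\ref{prop-dim_increase} the dimension is strictly increasing in the degree on the range $1 \le k \le n(q-1)$ (more precisely, distinct reduced-degree classes give distinct dimensions, and one checks $k$ and $\ell$ give the same code only if $k=\ell$ when $k<q$, or more carefully that equality of the codes forces $k=\ell$), we need $k = \ell$, i.e.\ $k = \frac{n(q-1)}{2}$. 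If instead $k \equiv 0 \pmod{q-1}$, then ${C_{n,k}^q}^\perp = \Spa_{\F_q}\{\mathbf 1, C_{n,\ell}^q\}$; but by Remark~\ref{rem-no1}, $\mathbf 1 \notin C_{n,\ell}^q$ when $\ell \equiv 0 \pmod{q-1}$ and $\ell>0$, while $\ell \equiv -k \equiv 0$ here, so ${C_{n,k}^q}^\perp$ strictly contains $C_{n,\ell}^q$ and cannot equal a code of the form $C_{n,k}^q$ unless $\mathbf 1 \in C_{n,k}^q$; a short dimension/parity check, or the observation that a self-dual code contains $\mathbf 1$ only if $\langle\mathbf 1,\mathbf 1\rangle=0$ which fails by the computation in Remark~\ref{rem-no1}, rules this out. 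So self-duality can only occur when $k \not\equiv 0 \pmod{q-1}$ and $k = \frac{n(q-1)}{2}$.

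Next I would extract the necessary arithmetic constraints from $k = \frac{n(q-1)}{2}$ being a positive integer with $k \not\equiv 0 \pmod{q-1}$. For $k$ to be an integer we need $n(q-1)$ even. If $q$ is odd then $q-1$ is even and $k = n\cdot\frac{q-1}{2}$; the condition $k \not\equiv 0 \pmod{q-1}$ then says $n\cdot\frac{q-1}{2}$ is not a multiple of $q-1$, i.e.\ $n/2$ is not an integer, i.e.\ $n$ is odd. So for $q$ odd we are forced into ``$q$ odd, $n$ odd.'' If $q$ is even, then $q-1$ is odd, so $n$ must be even for $k$ to be an integer; but then $k = \frac{n}{2}(q-1)$ is a multiple of $q-1$, contradicting $k \not\equiv 0 \pmod{q-1}$. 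Hence $q$ is odd. This shows the stated conditions are necessary.

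For sufficiency, assume $q$ and $n$ are odd and set $k = \frac{n(q-1)}{2}$, so $\ell = n(q-1)-k = k$ and, as computed above, $k \not\equiv 0 \pmod{q-1}$. Then Theorem~\ref{thm-Sor}(i) gives ${C_{n,k}^q}^\perp = C_{n,\ell}^q = C_{n,k}^q$, so the code is literally equal to its own dual. That completes the argument. The main obstacle I anticipate is the step asserting that $C_{n,k}^q = C_{n,k'}^q$ forces $k = k'$ within the range $1\le k,k'\le n(q-1)$: Proposition~\ref{prop-dim_increase} gives strict monotonicity of the dimension in $k$, which immediately yields this, so in fact there is no real obstacle once that proposition is in hand — one just has to cite it carefully, and handle the boundary/congruence bookkeeping (the role of $\mathbf 1$ in case (ii)) cleanly. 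I would also double-check the degenerate possibility $k = n(q-1)$, which is LCD by Remark~\ref{rem-no1} and hence not self-dual unless the code is zero, consistent with the claim since $k=n(q-1)$ would need $n(q-1)=\frac{n(q-1)}{2}$.
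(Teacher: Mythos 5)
Your proposal is correct and follows essentially the same route as the paper: identify the dual via Theorem~\ref{thm-Sor}, force $k=\ell=\frac{n(q-1)}{2}$ using the strict dimension monotonicity of Proposition~\ref{prop-dim_increase}, and rule out $k\equiv 0\pmod{q-1}$ via the fact that $\mathbf{1}\notin C_{n,k}^q$ from Remark~\ref{rem-no1}. The only cosmetic difference is that you deduce ``$q$ odd'' from the integrality of $\frac{n(q-1)}{2}$ together with $k\not\equiv 0\pmod{q-1}$, whereas the paper gets it immediately from the parity of the length $N=q^n+\cdots+1$; both are valid.
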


\begin{proof}
Suppose that $q$ and $n$ are odd and $k = \frac{n(q-1)}{2}$.  Then $k \not\equiv 0 \pmod{q-1}$ because $n$ is odd.  Note that $n(q-1)-k = k$. Theorem~\ref{thm-Sor}(i) implies that ${C_{n,k}^q}^{\perp} = {C_{n,k}^q}$, proving that ${C_{n,k}^q}$ is self-dual.

Conversely, suppose that $C^q_{n,k}$ is self-dual. Then its length $N=q^n + q^{n-1} + \cdots + 1$ is even. If $q$ is even, then $N$ is odd. Hence $q$ is odd.
We complete the proof by considering two cases.
\begin{itemize}
\item Suppose that $k \not\equiv 0 \pmod{q-1}$. Then by Theorem~\ref{thm-Sor}(i), we have
\[{C_{n,k}^q}^{\perp} = C_{n,\ell}^q=C_{n,n(q-1)-k}^q.
\]
Proposition \ref{prop-dim_increase} implies that for any $1 \le k,k' \le n(q-1)$ with $k \neq k'$ the codes $C_{n,k}^q$ and $C_{n,k'}^q$ have different dimensions.
Therefore, ${C_{n,k}^q}^{\perp}=C_{n,\ell}^q = C_{n,k}^q$ implies $\ell = n(q-1)-k=k$, that is, $k =  \frac{n(q-1)}{2}$.
If $n$ is even, then $k = \frac{n(q-1)}{2} \equiv 0 \pmod{q-1}$, which is a contradiction. Hence, $n$ is odd.

\item Suppose that $k \equiv 0 \pmod{q-1}$. Then we have
 ${C_{n,k}^q}^{\perp} = {\mbox{Span}}_{\F_q}  \{ {\bf 1}, C_{n,n(q-1)-k}^q \}$ by Theorem~\ref{thm-Sor}(ii).
 Since $C^q_{n,k}$ is self-dual, we know that
 $$C^q_{n,k}={C_{n,k}^q}^{\perp} = {\mbox{Span}}_{\F_q}  \{ {\bf 1}, C_{n,n(q-1)-k}^q \}.$$ This implies that ${\bf 1} \in C^q_{n,k}$. By the remark following Theorem \ref{thm-Sor}, this is not possible. Therefore, $C^q_{n,k}$ is not self-dual.
\end{itemize}
\end{proof}

\begin{example}{\em
Let $q=3$. In order to apply Theorem~\ref{thm-proj-sd}, we must have $n$ odd and $k = n$.

For each odd $n$, $C_{n,n}^3$ is a self-dual code with parameters
$[N=\frac{3^{n+1} -1}{2}, K=N/2, D=3^{\frac{n+1}{2}}]$. In particular, $C_{1,1}^3$ is a ternary self-dual $[4, 2, 3]$ code, which is known as the tetracode. When $k=3$, $C_{3,3}^3$ is a ternary self-dual $[40, 20, 9]$ code.
Elkies~\cite{Elk} showed that this code is self-dual in his work on the weight enumerator of the projective Reed-Muller code $C_{3,3}^q$.  The weight enumerator of this code is given by
\begin{eqnarray*}
& & x^{40} + 1040 x^{31} y^9 +  18720 x^{28} y^{12} +  1100736 x^{25} y^{15} + 25761840 x^{22}y^{18} \\
&+ & 236377440 x^{19} y^{21} + 908079120 x^{16}y^{24} + 1388750720 x^{13}y^{27} \\
& +& 783679104 x^{10} y^{30} + 137535840 x^7 y^{33} + 5468320 x^4y^{36} + 11520 xy^{39}.
\end{eqnarray*}
A computation in Magma shows that the automorphism group of $C_{3,3}^3$ is isomorphic to $\GL_4(\F_3)$.  We refer to~\cite{Ber} for a detailed discussion of automorphism groups of projective Reed-Muller codes. It is also interesting to note that the set of codewords of weight 9 form a 2-design. More precisely, there are 1040 codewords of weight 9. Since we are only considering the supports of these vectors over $\mathbb{F}_3$, there are exactly 520 distinct supports (also called blocks). We have checked using Magma that for any two distinct positions out of 40, there are exactly 24 blocks containing them, hence we get $\lambda=24$. Therefore we have a 2-$(40, 9, 24)$ design.
}

\end{example}

\begin{example}{\em
Let $q=5$. Theorem~\ref{thm-proj-sd} implies that $C_{n,k}^5$ is self-dual if and only if $n$ is odd and $k=2n$. Note that $k-1=2n-1=2(2n'+1)-1 = 4n'+1$ for some integer $n' \ge 0$. If we write $k-1 = 4r + s$ where $0 < s \le 4$, we find $r = n'$ and $s=1$.  This implies $r=\frac{2n-2}{4} =\frac{n-1}{2}$.
The code $C_{n,k}^5$ has parameters
$[N=\frac{5^{n+1} -1}{4}, K=N/2, D=4 \cdot 5^{\frac{n-1}{2}}]$.

 In particular, $C_{1,2}^5$ is a self-dual code with parameters $[6, 3, 4]$ over $\F_5$. It is also an MDS code. This code is unique up to $(1, -1)$-monomial matrix equivalence and is known as $F_6$ in the notation of~\cite{LeoPleSlo}.  A generator matrix of this code is given by
\[
\left[
\begin{array}{cccccc}
1 & 0 & 1 & 4 & 4 & 1 \\
1 & 1 & 0 & 1 & 4 & 4  \\
1 & 4 & 1 & 0 & 1 &4 \\
\end{array}
\right].
\]
}
\end{example}

In several places throughout the rest of the paper, we construct a choice of generator matrix for $C_{n,k}^q$.  This involves the evaluation map $\ev(f)$, which depends on an ordered choice of affine representatives for the points of $\PP^n(\F_q)$.  Recall that we are following the convention that for each projective point, we choose the affine representative for which the left-most nonzero coordinate is equal to $1$.  We fix an arbitrary ordering of these affine representatives to get $\mathcal{P}'$.  We will use this ordered choice of affine representatives $\mathcal{P}'$ throughout the rest of the paper.

We now answer the second the second part of Question \ref{Q1} and determine the projective Reed-Muller codes that are self-orthogonal.
\begin{theorem}\label{thm-so-1}
Suppose that $1\le k \le n(q-1)$.  Then $C_{n,k}^q$ is self-orthogonal if and only if $1 \le k \le \frac{n(q-1)}{2}$ and $2k \equiv 0 \pmod{q-1}$.
\end{theorem}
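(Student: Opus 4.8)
The plan is to use Proposition \ref{prop-hull}: $C_{n,k}^q$ is self-orthogonal if and only if $\Hull(C_{n,k}^q) = C_{n,k}^q$, equivalently $C_{n,k}^q \subseteq {C_{n,k}^q}^\perp$. I would split along the congruence class of $k$ modulo $q-1$ and invoke S\o{}rensen's description of the dual (Theorem \ref{thm-Sor}). First suppose $k \not\equiv 0 \pmod{q-1}$ and write $\ell = n(q-1) - k$. Then ${C_{n,k}^q}^\perp = C_{n,\ell}^q$, and self-orthogonality means $C_{n,k}^q \subseteq C_{n,\ell}^q$. The containment of two projective Reed-Muller codes of the \emph{same} length forces a dimension inequality $\dim C_{n,k}^q \le \dim C_{n,\ell}^q$; by Proposition \ref{prop-dim_increase} (monotonicity of the dimension in $k$ on the range $1 \le k \le n(q-1)$) this requires $k \le \ell$, i.e. $k \le \frac{n(q-1)}{2}$. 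Conversely, if $k \le \ell$, I would argue directly at the level of reduced monomials: a generating set for $C_{n,k}^q$ is $\{\ev(\overline{m}) : m \text{ reduced of degree } k\}$ (as recalled in the proof of Proposition \ref{prop-dim_increase}), and I want each such evaluation vector to lie in $C_{n,\ell}^q$. The cleanest route is to show that for $k \le \ell$ with $k \equiv -\ell \pmod{q-1}$ we actually have $C_{n,k}^q \subseteq C_{n,\ell}^q$ as codes — for instance by exhibiting, for each reduced monomial $m$ of degree $k$, a homogeneous polynomial $F$ of degree $\ell$ with $\ev(F) = \ev(m)$; multiplying $m$ by a suitable reduced monomial of degree $\ell - k$ that "uses each variable fully" (so that reduction collapses it back) should do this, using $\ell - k \equiv 0 \pmod{q-1}$ and $\ell - k \le n(q-1)$ to guarantee such a monomial exists. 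So the first half reduces to: (a) the congruence condition $2k \equiv 0 \pmod{q-1}$ is exactly $k \equiv \ell \pmod{q-1}$, hence both are $\not\equiv 0$, consistent with being in case (i); and (b) the inclusion argument above.

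For the converse direction of the congruence condition, I still have to rule out $k \not\equiv 0 \pmod{q-1}$ with $2k \not\equiv 0 \pmod{q-1}$, i.e. with $k \not\equiv \ell \pmod{q-1}$. Here S\o{}rensen gives ${C_{n,k}^q}^\perp = C_{n,\ell}^q$, so self-orthogonality would again mean $C_{n,k}^q \subseteq C_{n,\ell}^q$ with $\dim C_{n,k}^q \le \dim C_{n,\ell}^q$, forcing $k \le \ell$ by monotonicity. So the dimension argument alone does \emph{not} eliminate these cases, and this is where Corollary \ref{cor-dim_diff} comes in: if $k < \ell$ but $C_{n,k}^q \subseteq C_{n,\ell}^q$, then $C_{n,k}^q \subseteq \Hull(C_{n,\ell}^q)$, and I expect one can derive a contradiction by a dimension or rank count — or, more directly, by producing an explicit codeword of $C_{n,k}^q$ not orthogonal to itself when $k \not\equiv \ell \pmod{q-1}$. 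I would look for a monomial $m$ of degree $k$ such that $\langle \ev(m), \ev(m') \rangle \neq 0$ for some monomial $m'$ of degree $k$ — the natural candidates being pure powers like $x_0^k$ (reduced to $x_0^{k \bmod (q-1)}$), evaluated against $x_0^{\ell'}$-type vectors, using the standard character-sum evaluations $\sum_{t \in \F_q} t^j$ over the affine chart. The point is that $\langle \ev(\overline{x^a}), \ev(\overline{x^b})\rangle$ over all of $\PP^n(\F_q)$ with the left-normalized representatives is a computable sum of products of $\sum_{t\in\F_q^*} t^j$ and $\sum_{t\in\F_q}t^j$ factors, which vanishes exactly when the total exponent in each variable is not divisible by $q-1$ in the right way; the condition $2k \equiv 0 \pmod{q-1}$ is precisely what makes all the self-pairings and cross-pairings within $C_{n,k}^q$ vanish, and its failure lets me point to a nonzero pairing.

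Finally I handle $k \equiv 0 \pmod{q-1}$. Then ${C_{n,k}^q}^\perp = \Spa_{\F_q}\{\mathbf{1}, C_{n,\ell}^q\}$, and in particular $\mathbf{1} \in {C_{n,k}^q}^\perp$. If $C_{n,k}^q$ were self-orthogonal, then in particular $\mathbf{1} \in C_{n,k}^q$ would have to be orthogonal to itself, but $\langle \mathbf{1}, \mathbf{1}\rangle = N = \frac{q^{n+1}-1}{q-1} \equiv 1 \pmod{\charac \F_q}$ is nonzero (exactly as computed in Remark \ref{rem-no1}); however, this only gives a contradiction if $\mathbf{1} \in C_{n,k}^q$. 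Since $k \equiv 0 \pmod{q-1}$ and $k \ge q-1 > 0$, the constant polynomial is not reached: more carefully, if $C_{n,k}^q$ were self-orthogonal we would have $C_{n,k}^q \subseteq {C_{n,k}^q}^\perp = \Spa_{\F_q}\{\mathbf1, C_{n,\ell}^q\}$, which has dimension $\dim C_{n,\ell}^q + 1$, and I would derive a contradiction with $\dim C_{n,k}^q$; when $k \le \ell$ this needs the strict monotonicity of Proposition \ref{prop-dim_increase} together with the extra $+1$ to be handled, and when $k > \ell$ it is immediate. Also $2k \equiv 0 \pmod{q-1}$ is automatic here, so this subcase must be \emph{excluded} by the theorem, confirming that the $k \equiv 0$ case always fails to be self-orthogonal (note the claimed condition "$2k \equiv 0$" is necessary but the theorem implicitly also requires $k \not\equiv 0$ via case (i); I would make this compatibility explicit).

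The main obstacle I anticipate is the converse inclusion $C_{n,k}^q \subseteq C_{n,\ell}^q$ when $k \le \ell$ and $2k \equiv 0 \pmod{q-1}$: one must show every degree-$k$ reduced monomial evaluation is realized by a degree-$\ell$ homogeneous polynomial, which is a statement about how reduction interacts with the left-normalized affine representatives (e.g. on the chart $x_0 = 1$ one is comparing polynomials of different degrees that agree as functions). Getting the bookkeeping right — especially at the "boundary" points where earlier coordinates vanish and a different affine chart is used — is the delicate part; Proposition \ref{prop-dim_increase}'s reduced-monomial framework is the right tool, and I would lean on it to reduce everything to exponent arithmetic modulo $q-1$.
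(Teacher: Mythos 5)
Your overall architecture (S\o{}rensen's dual, dimension monotonicity, and an explicit non-vanishing pairing computed by character sums) matches the paper's, but there are two genuine errors. The most serious is your treatment of $k \equiv 0 \pmod{q-1}$: you assert that this case ``always fails to be self-orthogonal'' and that the theorem ``implicitly also requires $k \not\equiv 0$.'' That is a misreading of the statement: the condition $2k \equiv 0 \pmod{q-1}$ \emph{includes} $k \equiv 0 \pmod{q-1}$, and when additionally $k \le \frac{n(q-1)}{2}$ the code \emph{is} self-orthogonal. Writing $\ell = n(q-1)-k = k + r(q-1)$, the paper shows $C_{n,k}^q \subseteq C_{n,\ell}^q \subseteq {C_{n,k}^q}^\perp$, and the second containment holds in both parts of Theorem \ref{thm-Sor}, since part (ii) gives ${C_{n,k}^q}^\perp = \Spa_{\F_q}\{\mathbf{1}, C_{n,\ell}^q\} \supseteq C_{n,\ell}^q$. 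Concretely, $C_{4,2}^3$ and the binary simplex codes $C_{n,1}^2$ for $n\ge 2$ are self-orthogonal with $k \equiv 0 \pmod{q-1}$. The dimension contradiction you hope to extract from $\dim C_{n,k}^q \le \dim C_{n,\ell}^q + 1$ simply does not exist when $k \le \ell$, and $\mathbf{1} \in C^\perp$ with $\mathbf{1} \notin C$ is no obstruction to $C \subseteq C^\perp$. (Those tools are needed, but on the other side $k > \frac{n(q-1)}{2}$, where the paper must also separately handle the edge case $k-\ell = 1$ with $q=2$, which your plan does not address.)

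Second, in the case $k \le \frac{n(q-1)}{2}$ with $2k \not\equiv 0 \pmod{q-1}$, your proposed witness $x_0^k$ is precisely the monomial that never works: the paper's proof of Theorem \ref{thm-LCD} shows $\ev(x_0^k) \in \Hull(C_{n,k}^q)$ for every $k < n(q-1)$, because each pairing against $\ev(x_0^k)$ collapses to a product of full sums $\sum_{\alpha\in\F_q}\alpha^{b_i}$ that cannot all be nonzero. The correct witness is $x_n^k$: with the left-normalized representatives one gets $\sum_{p\in\mathcal{P}'} x_n^{2k}(p) = \sum_{\alpha\in\F_q}\alpha^{2k} + 1$, which equals $1 \neq 0$ exactly when $2k \not\equiv 0 \pmod{q-1}$. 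A smaller but real point in your forward inclusion: the extra degree $r(q-1)$ must be placed on a variable already appearing in $m$ with positive exponent (the paper uses the first such variable); multiplying by a monomial that ``uses each variable fully'' changes the evaluation at points where a newly introduced variable vanishes, since reduction never deletes a variable, so $\overline{m'} = m$ would fail there.
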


\begin{proof}
Suppose that $1 \le k \le \frac{n(q-1)}{2}$ and $2k \equiv 0 \pmod{q-1}$. Since $2k \equiv 0 \pmod{q-1}$ and $k \le \frac{n(q-1)}{2}$, we have $\ell =n(q-1)-k = k + r (q-1)$ for some $r \ge 0$.
 By Theorem \ref{thm-Sor}, $C_{n,n(q-1)-k}^q \subseteq {C_{n,k}^q}^\perp$.

 In order to prove $C_{n,k} \subseteq {C_{n,k}^q}^\perp$, it suffices to show that $C_{n,k} \subseteq C_{n,n(q-1)-k}^q$. For each monomial $m$ of degree $k$, we construct a monomial $m'$ of degree $n(q-1)-k$ such that $\overline{m'} = m$.  Let $x_0^{a_1}\cdots x_n^{a_n}$ be a monomial of degree $k$, so $0 \le a_0,\ldots, a_n \le k$ and $\sum_{i=0}^n a_i = k$.  Let $j$ be the smallest positive integer such that $a_j >0$. The evaluation map sends a polynomial and the reduced form of that polynomial to the same element of $\F_q^N$.  More precisely, $x_j^{r(q-1)+a_j} = x_j^{a_j}$ at every element of $\F_q$. Therefore
\[
\ev_{\mathcal{P}}(x_0^{a_1}\cdots x_n^{a_n} x_j^{r(q-1)})=
\ev_{\mathcal{P}}(x_0^{a_1}\cdots x_n^{a_n}).
\]
We see that the span of $\ev_{\mathcal{P}}(x_0^{a_1}\cdots x_n^{a_n})$ as we vary over all monomials in $x_0,\ldots, x_n$ of degree $k$ is a subcode of $C_{n,n(q-1)-k}^q$, completing the proof that $C_{n,k}^q$ is self-orthogonal.

We now show that $C_{n,k}^q$ is not self-orthogonal in all other cases.  Suppose $\frac{n(q-1)}{2} < k \le n(q-1)$.  Let $\ell = n(q-1)-k$, so $0 \le  \ell < \frac{n(q-1)}{2}$ and $k - \ell = 2k - n(q-1)$.  We consider two cases.
\begin{enumerate}
\item Suppose $k - \ell = 1$.  This implies $k = \frac{n(q-1)+1}{2}$, so in particular $q$ is even and $n$ is odd.  Note that $2k = n(q-1) + 1 \equiv 1 \pmod{q-1}$.  If $q \neq 2$, then $k \not\equiv 0 \pmod{q-1}$ and Theorem \ref{thm-Sor}(i) implies that ${C_{n,k}^q}^\perp = C_{n,\ell}^q$.  Proposition \ref{prop-dim_increase} implies that $\dim({C_{n,k}^q}^\perp) < \dim(C_{n,k}^q)$, and therefore $C_{n,k}^q$ is not self-orthogonal.

Suppose $q=2$.  If $n=1$, then $k = 1$. It is easy to check that $C_{1,1}^2$ is not self-orthogonal, so we suppose $n > 1$.  In particular, $k = \frac{n(q-1)+1}{2} < n(q-1)$. Theorem \ref{thm-Sor} implies that ${C_{n,k}^q}^\perp = {\mbox{Span}}_{\F_q}  \{ {\bf 1}, C_{n,\ell}^q \}$. The argument in Remark \ref{rem-no1} implies that ${\bf 1}\not\in C_{n,\ell}^q$, so
\[
\dim({C_{n,k}^q}^\perp)  = \dim(C_{n,\ell}^q) +1 \le \dim(C_{n,k}^q).
\]
If $C$ is a linear code for which $\dim(C^\perp) \le \dim(C)$, then $C$ is self-orthogonal if and only if it is self-dual.  But, ${C_{n,k}^q}^\perp$ contains ${\bf 1}$, and the argument in Remark \ref{rem-no1} shows that $C_{n,k}^q$ does not.  We conclude that $C_{n,k}^q$ is not self-orthogonal.

\item Suppose $k - \ell \ge 2$.  By Theorem \ref{thm-Sor}, ${C_{n,k}^q}^\perp \subseteq {\mbox{Span}}_{\F_q}  \{ {\bf 1}, C_{n,\ell}^q \}$, which implies $\dim({C_{n,k}^q}^\perp) \le \dim(C_{n,\ell}^q) + 1$.  By Corollary \ref{cor-dim_diff}, $\dim(C_{n,k}^q) - \dim(C_{n,\ell}^q) \ge k-\ell \ge 2$.  We conclude that $\dim(C_{n,k}^q) > \dim({C_{n,k}^q}^\perp)$, so $C_{n,k}^q$ is not self-orthogonal.

\end{enumerate}

Now suppose that $1 \le k \le \frac{n(q-1)}{2}$ and that $2k \not\equiv 0 \pmod{q-1}$.  Let $G$ be a generator matrix for $C_{n,k}^q$.  We will show that $G G^T$ has a nonzero entry.  Applying Proposition \ref{prop-hull} then shows that $C_{n,k}^q$ is not self-orthogonal.

Choose a basis for $C_{n,k}^q$ including the monomial $x_n^k$.  Let $G$ be a generator matrix for $C_{n,k}^q$ for which the first row is equal to $\ev(x_n^k)$.  Throughout the rest of this proof, let $f(x_0,\ldots, x_n) = x_n^{2k}$.  The top left entry of $G G^T$ is equal to $\sum_{p \in \mathcal{P}'} f(p)$.

We recall the basic fact that for any positive integer $r$, we have
\begin{equation}\label{eq-sum}
\sum_{\beta\in \F_q} \beta^r =
\begin{cases}
-1 & \text{if } r \equiv 0 \pmod{q-1} \\
0 & \text{otherwise}
\end{cases}.
\end{equation}
Because of our choice of affine representatives $\mathcal{P}'$ and the fact that $2k \not\equiv 0 \pmod{q-1}$, we have $\sum_{p \in \mathcal{P}'} f(p) = f(0,\ldots,0,1) = 1^{2k} = 1$.  Therefore, $GG^T$ has a nonzero entry and $C_{n,k}^q$ is not self-orthogonal.
\end{proof}

Theorem~\ref{thm-so-1} directly implies the following result.
\begin{cor} \label{cor-hull-1}
Suppose that $1 \le k \le \frac{n(q-1)}{2}$ and $2k \equiv 0 \pmod{q-1}$.  Then $C^q_{n,k} \subseteq {C^q_{n,k}}^{\perp}$ and so $\Hull(C^q_{n,k}) = C^q_{n,k}$.
\end{cor}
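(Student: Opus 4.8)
The plan is to unwind the definition of the hull and invoke Theorem \ref{thm-so-1}, which has already been proved. Under the hypotheses $1 \le k \le \frac{n(q-1)}{2}$ and $2k \equiv 0 \pmod{q-1}$, the ``if'' direction of Theorem \ref{thm-so-1} asserts exactly that $C_{n,k}^q$ is self-orthogonal, i.e., $C_{n,k}^q \subseteq {C_{n,k}^q}^\perp$. Once we have this containment, $\Hull(C_{n,k}^q) = C_{n,k}^q \cap {C_{n,k}^q}^\perp = C_{n,k}^q$, since intersecting a set with a superset returns the set itself. That is the whole argument.

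If one preferred a self-contained proof that does not merely cite Theorem \ref{thm-so-1}, I would reproduce the relevant half of its argument. Since $2k \equiv 0 \pmod{q-1}$ and $k \le \frac{n(q-1)}{2}$, the integer $\ell = n(q-1)-k$ satisfies $\ell = k + r(q-1)$ for some $r \ge 0$, so Theorem \ref{thm-Sor} gives $C_{n,\ell}^q \subseteq {C_{n,k}^q}^\perp$. For each monomial $m = x_0^{a_0}\cdots x_n^{a_n}$ of degree $k$, let $j$ be the least index with $a_j > 0$; then $m' = m\, x_j^{r(q-1)}$ has degree $n(q-1)-k$ and $\overline{m'} = m$, so $\ev_{\mathcal P}(m) = \ev_{\mathcal P}(m') \in C_{n,\ell}^q$. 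As these evaluations span $C_{n,k}^q$, we obtain $C_{n,k}^q \subseteq C_{n,\ell}^q \subseteq {C_{n,k}^q}^\perp$, and the statement about the hull follows as above.

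There is no genuine obstacle here: the entire content of the corollary is carried by Theorem \ref{thm-so-1}, and the only remaining step is the elementary set-theoretic observation that $\Hull(C) = C$ whenever $C \subseteq C^\perp$. I would keep the proof to one or two sentences, citing Theorem \ref{thm-so-1} (or, optionally, Theorem \ref{thm-Sor} together with the monomial-lifting argument above).
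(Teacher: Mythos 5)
Your proposal is correct and matches the paper exactly: the paper derives this corollary as an immediate consequence of Theorem \ref{thm-so-1}, with the only additional step being the observation that $C \cap C^\perp = C$ when $C \subseteq C^\perp$. The optional self-contained argument you sketch is precisely the relevant half of the paper's proof of Theorem \ref{thm-so-1}, so nothing further is needed.
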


\begin{example}{\rm
For any odd prime power $q$, $C_{2,\frac{q-1}{2}}^q$ is a self-orthogonal code with
parameters $[N=q^2+q+1, K=\frac{(q+3)(q+1)}{8}, D=\frac{(q+3)}{2} q]$. For example, when $q=3$, $C_{2,1}^q$ is a self-orthogonal $[13, 3, 9]$ code over $\mathbb F_3$, whose parameters are optimal based on Grassl's table~\cite{Gra}. When $q=5$, $C_{2,2}^q$ is a self-orthogonal $[31, 6, 20]$ code over $\mathbb F_5$, whose parameters are best known based on Grassl's table~\cite{Gra}.
}
\end{example}

The following corollary describes a situation for which ${C_{n,k}^q}^\perp$ is self-orthogonal.
\begin{cor}\label{cor-dual_so}
Suppose that $\frac{n(q-1)}{2} \le k \le n(q-1)$ and $2k \equiv 0 \pmod{q-1}$ but $k \not\equiv 0 \pmod{q-1}$.  Then ${C_{n,k}^q}^\perp$ is self-orthogonal.  That is, for any odd positive integer $n \le r < 2n$, we have ${C_{n,\frac{r(q-1)}{2}}^q}^\perp$ is self-orthogonal.
\end{cor}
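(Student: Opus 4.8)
The plan is to reduce the statement to Theorem \ref{thm-so-1} using S\o{}rensen's description of the dual in Theorem \ref{thm-Sor}. First I would set $\ell = n(q-1)-k$. The hypothesis $k \not\equiv 0 \pmod{q-1}$ rules out $k = n(q-1)$, so $k < n(q-1)$ and hence $\ell \ge 1$; combined with $\frac{n(q-1)}{2} \le k$ this gives $1 \le \ell \le \frac{n(q-1)}{2} \le n(q-1)$. Since $k \not\equiv 0 \pmod{q-1}$, Theorem \ref{thm-Sor}(i) applies and yields ${C_{n,k}^q}^\perp = C_{n,\ell}^q$, so it suffices to show that $C_{n,\ell}^q$ is self-orthogonal.

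Next I would verify that $C_{n,\ell}^q$ meets the two conditions of Theorem \ref{thm-so-1}. The bound $1 \le \ell \le \frac{n(q-1)}{2}$ was just checked. For the congruence, $2\ell = 2n(q-1) - 2k \equiv -2k \equiv 0 \pmod{q-1}$, using $2k \equiv 0 \pmod{q-1}$. Theorem \ref{thm-so-1} then gives that $C_{n,\ell}^q$ is self-orthogonal, i.e. ${C_{n,k}^q}^\perp$ is self-orthogonal. The boundary value $k = \frac{n(q-1)}{2}$ (which forces $\ell = k$, $n$ odd, and $q$ odd) needs no separate treatment: there the conclusion is just the self-dual case already recorded in Theorem \ref{thm-proj-sd} and Corollary \ref{cor-hull-1}.

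For the reformulation, I would write $k = \frac{r(q-1)}{2}$ with $n \le r < 2n$ and $r$ odd (so that $k$ is an integer one needs $q-1$ even, which is automatic in the range of interest). Then $2k = r(q-1) \equiv 0 \pmod{q-1}$; $k \not\equiv 0 \pmod{q-1}$ precisely because $r$ is odd; and $n \le r < 2n$ translates to $\frac{n(q-1)}{2} \le k < n(q-1)$. Thus the hypotheses of the first part hold. Conversely, every $k$ with $\frac{n(q-1)}{2} \le k \le n(q-1)$, $2k \equiv 0 \pmod{q-1}$, and $k \not\equiv 0 \pmod{q-1}$ arises this way with $r = \frac{2k}{q-1}$ odd, so the two formulations coincide.

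I do not expect a genuine obstacle: the content of the corollary is entirely contained in Theorems \ref{thm-Sor} and \ref{thm-so-1}, and the only points requiring care are the elementary bookkeeping that $k \neq n(q-1)$ (so that $\ell \ge 1$ and part (i), not part (ii), of Theorem \ref{thm-Sor} applies) and the verification that $2\ell \equiv 0 \pmod{q-1}$.
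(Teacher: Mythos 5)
Your proposal is correct and follows essentially the same route as the paper: set $\ell = n(q-1)-k$, apply Theorem \ref{thm-Sor}(i) to identify ${C_{n,k}^q}^\perp$ with $C_{n,\ell}^q$, check that $1 \le \ell \le \frac{n(q-1)}{2}$ and $2\ell \equiv 0 \pmod{q-1}$, and conclude by Theorem \ref{thm-so-1}. The extra bookkeeping you supply (ruling out $k = n(q-1)$ and unpacking the reformulation in terms of $r$) is sound and only makes explicit what the paper leaves implicit.
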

\begin{proof}
Let $\ell = n(q-1) - k$ and note that $\ell \le \frac{n(q-1)}{2}$.  Since $2\ell \equiv -2k \pmod{q-1}$, we see that $2\ell \equiv 0 \pmod{q-1}$, but $\ell \equiv -k \not\equiv 0 \pmod{q-1}$.  Theorem \ref{thm-Sor} implies that ${C_{n,k}^q}^\perp = C_{n,\ell}^q$ and Theorem \ref{thm-so-1} implies that $C_{n,\ell}^q$ is self-orthogonal.
\end{proof}

\begin{example}{\rm
This corollary shows that ${C_{2,3}^3}^\perp$ is self-orthogonal since $3-1 \le 3  \le 2(3-1)$ and $6\equiv 0 \pmod{2}$, but $3 \not\equiv 0 \pmod{2}$.  This examples comes from choosing $r=3$ in the second part of the statement of the corollary.
}
\end{example}

We next characterize the projective Reed-Muller codes that are LCD. That is, we determine the $(n,k,q)$  for which $C^q_{n,k} \cap {C_{n,k}^q}^\perp$ is trivial.

\begin{theorem}\label{thm-LCD}
Let $1\le k < n(q-1)$.  Then $C^q_{n,k} \cap {C_{n,k}^q}^\perp$ is nontrivial.  That is, $C_{n,k}^q$ is not LCD.
\end{theorem}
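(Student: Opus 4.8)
The plan is to exhibit one explicit nonzero codeword lying in both $C_{n,k}^q$ and ${C_{n,k}^q}^\perp$, which already shows $\Hull(C_{n,k}^q)\neq\{0\}$. Set $\ell=n(q-1)-k$; since $1\le k<n(q-1)$ we have $1\le\ell\le n(q-1)$, so $C_{n,\ell}^q$ is a projective Reed-Muller code in the range where Theorem~\ref{thm-Sor} applies. In both cases of that theorem one has $C_{n,\ell}^q\subseteq{C_{n,k}^q}^\perp$: it is the entire dual when $k\not\equiv0\pmod{q-1}$, and it sits inside $\Spa_{\F_q}\{{\bf 1},\,C_{n,\ell}^q\}$ when $k\equiv0\pmod{q-1}$. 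Thus it is enough to find a nonzero vector in $C_{n,k}^q\cap C_{n,\ell}^q$.

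The key point I would use is that the codeword $\ev(x_0^k)$ of $C_{n,k}^q$ does not actually depend on the exponent $k$, as long as $k\ge1$. Recall that with our convention the affine representative in $\mathcal{P}'$ of each point of $\PP^n(\F_q)$ has first coordinate equal to $1$ (for points $[1:\ast:\cdots:\ast]$) or $0$ (for the remaining points); since $k\ge1$, the monomial $x_0^k$ evaluates to $1$ at every representative of the first type and to $0$ at every representative of the second type. Hence $v:=\ev(x_0^k)$ is the indicator vector of the standard affine chart $\{x_0\neq0\}$ of $\PP^n(\F_q)$. In particular $v\neq0$ (the chart is nonempty), and this description of $v$ is independent of the exponent, so $v=\ev(x_0^\ell)$ as well, whence $v\in C_{n,\ell}^q$. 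Combining the two memberships, $0\neq v\in C_{n,k}^q\cap C_{n,\ell}^q\subseteq\Hull(C_{n,k}^q)$, so $C_{n,k}^q$ is not LCD.

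I do not expect a real obstacle here: once one notices that $\ev(x_0^k)$ is insensitive to $k$, the argument is immediate, and the only small point to check is that $\ell\ge1$ — which holds precisely because $k<n(q-1)$ — so that $C_{n,\ell}^q$ is a bona fide projective Reed-Muller code to which Theorem~\ref{thm-Sor} applies. If this observation were not available, the natural fallback would be either to compare the reduced monomials spanning $C_{n,k}^q$ with those spanning $C_{n,\ell}^q$, or to compute $\langle\ev(x_0^k),\ev(m)\rangle=\sum_{p\in\mathcal{P}'}(x_0^k m)(p)$ directly by summing over the affine cells of $\PP^n(\F_q)$; both routes are noticeably more laborious and involve case distinctions on $k\bmod(q-1)$ and on $n$.
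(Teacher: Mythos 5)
Your proof is correct, and it takes a genuinely different route from the paper's, even though both arguments use the same witness $\ev(x_0^k)$. The paper establishes $\ev(x_0^k)\in{C_{n,k}^q}^\perp$ by brute force: it expands $\langle\ev(x_0^k),\ev(m)\rangle=\sum_{p\in\mathcal{P}'}(x_0^k m)(p)$ over the affine cells of $\PP^n(\F_q)$ and invokes the power-sum identity \eqref{eq-sum} together with the bound $b_1+\cdots+b_n\le k<n(q-1)$ to kill every term. You instead observe that, with the paper's normalization of $\mathcal{P}'$, the vector $\ev(x_0^k)$ is the indicator of the chart $\{x_0\neq0\}$ and hence is independent of the exponent once it is positive; therefore it equals $\ev(x_0^\ell)\in C_{n,\ell}^q$ with $\ell=n(q-1)-k\ge1$, and Theorem~\ref{thm-Sor} places $C_{n,\ell}^q$ inside ${C_{n,k}^q}^\perp$ in both of its cases. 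Your argument is shorter and yields the slightly stronger fact that $\ev(x_0^k)$ lies in $C_{n,k}^q\cap C_{n,\ell}^q$ (which is relevant to the relative-hull questions discussed in Section~\ref{sec:main2}), at the cost of depending on S\o{}rensen's duality theorem; the paper's computation is self-contained modulo \eqref{eq-sum} and is of a piece with the cell-by-cell summations used in Theorems~\ref{thm-q_large} and~\ref{thm-k_to_qm1}. All the hypotheses you need do check out: $1\le\ell\le n(q-1)-1$ precisely because $1\le k<n(q-1)$, and the chart is nonempty so the witness is nonzero.
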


\begin{remark}
As discussed at the end of Remark \ref{rem-no1}, $C_{n,n(q-1)}^q$ is LCD.  Taken together with Theorem \ref{thm-LCD} we see that if $1\le k \le n(q-1)$, then $C_{n,k}^q$ is LCD if and only if $k = n(q-1)$.
\end{remark}

\begin{proof}[Proof of Theorem \ref{thm-LCD}]
We prove that $\Hull(C)$ is nontrivial by showing that $\ev(x_0^k) \in C^q_{n,k} \cap {C_{n,k}^q}^\perp$.  Since $x_0^k$ is a monomial of degree $k$, it is clear that $\ev(x_0^k) \in C_{n,k}^q$.  We need only show that that $\ev(x_0^k)$ is orthogonal to $\ev(m)$ for every monomial $m = x_0^{b_0} x_1^{b_1} \cdots x_n^{b_n}$ where each $b_i$ satisfies $0 \le b_i \le k$ and $b_0 + \cdots + b_n = k$.

We compute that $\langle \ev(x_0^k),\ev(m)\rangle$ is equal to
\begin{eqnarray*}
\sum_{p \in \mathcal{P}'} (x_0^k m)(p) & = & \sum_{\alpha_1,\ldots, \alpha_{n} \in \F_q} 1^{k+b_0} \alpha_1^{b_1} \alpha_2^{b_2}\cdots \alpha_{n}^{b_{n}}   \\
& & + \sum_{\alpha_2,\ldots, \alpha_{n} \in \F_q} 0^{k+b_0} 1^{b_{1}}
\alpha_2^{b_2} \cdots \alpha_{n}^{b_{n}}  \\
& & + \cdots \\
& & + \sum_{\alpha_n\in \F_q}  \left(0^{k+b_0} \cdots 0^{b_{n-2}} \right)1^{b_{n-1}}  \alpha_n^{b_n} \\
& & + \left(0^{k+b_0}\cdots 0^{b_{n-1}}\right) 1^{b_n}\\
 & = &  \sum_{\alpha_1,\ldots, \alpha_{n} \in \F_q} \alpha_1^{b_1} \alpha_2^{b_2}\cdots \alpha_{n}^{b_{n}} \\
  & = &  \left(\sum_{\alpha_1\in \F_q} \alpha_1^{b_1}\right)  \left(\sum_{\alpha_2\in \F_q} \alpha_2^{b_2}\right) \cdots  \left(\sum_{\alpha_n\in \F_q} \alpha_n^{b_n}\right).
\end{eqnarray*}
For a nonnegative integer $b$, Equation \eqref{eq-sum} implies that $\sum_{\alpha \in \F_q} \alpha^b$ is $0$ unless $b \equiv 0 \pmod{q-1}$ and $b$ is positive.  Since $b_1+\cdots + b_n \le k < n(q-1)$, it is not possible for each of the sums $\sum_{\alpha_i\in \F_q} \alpha_i^{b_i}$ to simultaneously be nonzero.  We conclude that $\ev(x_0^k)$ is orthogonal to $\ev(m)$.  Since $m$ was an arbitrary monomial of degree $k$, we see that $\ev(x_0^k) \in {C_{n,k}^q}^\perp$.
\end{proof}

\section{Hulls of Projective Reed-Muller Codes}\label{sec:main2}

In the previous section we determined when projective Reed-Muller codes have certain special properties, that is, when they are self-dual, when they are self-orthogonal, and when they are LCD.  For most parameters, the projective Reed-Muller code $C_{n,k}^q$ does not satisfy any of these special conditions, but we would still like to understand the dimension of its hull.  In this section we will determine the dimension of $\Hull(C_{n,k}^q)$ when $k < q-1$.  Replacing $C_{n,k}^q$ with its dual and noting that for any code $C$, $\Hull(C) = \Hull(C^\perp)$, we will also determine the dimension of $\Hull(C_{n,k}^q)$ when $(n-1)(q-1) < k < n(q-1)$.  We then consider the case where $n = 2$ and see how this recovers a result of Ruano and San-Jos\'e about dimensions of hulls of projective Reed-Muller codes over the projective plane.  At the end of this section we summarize the range of parameters for which our results determine the dimension of $\Hull(C_{n,k}^q)$.

We begin by determining $\dim(\Hull(C_{n,k}^q))$ when $k$ is not too large relative to $q$.
\begin{theorem}\label{thm-q_large}
Suppose that $q > 2k+1$. Then
\[
\dim(\Hull(C^q_{n,k})) = \dim (C^q_{n,k})-1 =\binom{n+k}{k}-1.
\]
Moreover, a basis for $\Hull(C_{n,k}^q)$ is given by $\{\ev(m)\}_{m\in \mathcal{M}}$ where $\mathcal{M}$ is the set of monomials of degree $k$ in $x_0,x_1,\ldots, x_n$ except for the monomial $x_n^k$.
\end{theorem}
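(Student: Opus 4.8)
The plan is to work directly with Proposition \ref{prop-hull} and the character-sum identity \eqref{eq-sum}, essentially upgrading the computation already used in the proof of Theorem \ref{thm-LCD}. First, note that $q > 2k+1$ forces $k < q$, so Lemma \ref{lem-parameters} gives $\dim(C_{n,k}^q) = \binom{n+k}{k}$; since the $\binom{n+k}{k}$ vectors $\ev(m)$, as $m$ runs over the monomials of degree $k$ in $x_0,\ldots,x_n$, span $C_{n,k}^q$, they must form a basis. Let $G$ be the generator matrix whose rows are these $\ev(m)$, ordered so that the last row is $\ev(x_n^k)$. By Proposition \ref{prop-hull}, $\dim(\Hull(C_{n,k}^q)) = \binom{n+k}{k} - \rank(GG^T)$, so it suffices to prove $\rank(GG^T) = 1$.

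The $(m,m')$ entry of $GG^T$ is $\langle \ev(m),\ev(m')\rangle = \sum_{p\in\mathcal{P}'}(mm')(p)$, where $mm'$ is a monomial of degree $2k$. The key step is to establish the following: for every monomial $M = x_0^{c_0}\cdots x_n^{c_n}$ of degree $2k$,
\[
\sum_{p\in\mathcal{P}'} M(p) = \begin{cases} 1 & \text{if } M = x_n^{2k},\\ 0 & \text{otherwise.}\end{cases}
\]
To prove this, I would split $\mathcal{P}'$ according to the position $i$ of the leftmost nonzero (hence equal to $1$) coordinate: the relevant affine representatives are the points $(0,\ldots,0,1,\alpha_{i+1},\ldots,\alpha_n)$ with $\alpha_{i+1},\ldots,\alpha_n\in\F_q$. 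Summing $M$ over this block gives $0$ unless $c_0 = \cdots = c_{i-1} = 0$, in which case it equals $\prod_{j=i+1}^{n}\bigl(\sum_{\alpha_j\in\F_q}\alpha_j^{c_j}\bigr)$. By \eqref{eq-sum} (together with the fact that $\sum_{\alpha\in\F_q}\alpha^0 = q = 0$ in $\F_q$), each factor vanishes unless $c_j$ is positive and $\equiv 0 \pmod{q-1}$, i.e.\ $c_j \ge q-1$. When $i < n$ there is at least one such factor, so a nonzero contribution would force $c_{i+1}+\cdots+c_n \ge q-1 > 2k$, which is impossible since $c_{i+1}+\cdots+c_n \le 2k$; hence every block with $i<n$ contributes $0$. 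The only surviving block is $i=n$, the single point $(0,\ldots,0,1)$, which contributes $1$ exactly when $c_0 = \cdots = c_{n-1} = 0$, i.e.\ $M = x_n^{2k}$. This is the one place the hypothesis $q > 2k+1$ enters.

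Applying this identity with $M = mm'$, and noting that $mm' = x_n^{2k}$ forces $m = m' = x_n^k$, we conclude that $GG^T$ has exactly one nonzero entry, a $1$ in the corner indexed by $(x_n^k, x_n^k)$. Thus $\rank(GG^T) = 1$ and $\dim(\Hull(C_{n,k}^q)) = \binom{n+k}{k} - 1$. For the explicit basis, the vectors $\{\ev(m) : m \in \mathcal{M}\}$ are linearly independent (being a subset of a basis of $C_{n,k}^q$) and number exactly $\binom{n+k}{k}-1 = \dim\Hull(C_{n,k}^q)$, so it remains only to check each lies in $\Hull(C_{n,k}^q) = C_{n,k}^q \cap (C_{n,k}^q)^\perp$: membership in $C_{n,k}^q$ is immediate, and the identity above shows $\langle\ev(m),\ev(m')\rangle = 0$ for every degree-$k$ monomial $m'$ whenever $m \ne x_n^k$ (since then $mm' \ne x_n^{2k}$), giving $\ev(m)\in(C_{n,k}^q)^\perp$.

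I do not anticipate a genuine obstacle here: the entire argument rests on the displayed character-sum identity, whose proof is a routine organization of the sum over $\mathcal{P}'$ by leftmost nonzero coordinate, with the inequality $q-1 > 2k$ killing every off-diagonal (and all but one diagonal) contribution. The only mild care needed is the bookkeeping of the $0^0$ and $q\cdot 1 = 0$ conventions in the innermost sums, exactly as in the proof of Theorem \ref{thm-LCD}.
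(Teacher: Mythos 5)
Your proposal is correct and follows essentially the same route as the paper's proof: both reduce to showing $\rank(GG^T)=1$ via Proposition \ref{prop-hull}, decompose the sum over $\mathcal{P}'$ by the position of the leftmost nonzero coordinate, and use Equation \eqref{eq-sum} with $2k < q-1$ to kill every entry except the single $1$ indexed by $(x_n^k, x_n^k)$. Your reorganization of the computation as a standalone identity for arbitrary degree-$2k$ monomials is a minor stylistic difference, not a different argument.
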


\begin{proof}
By Proposition \ref{prop-hull} it suffices to show that $\rank(G G^T) = 1$ where $G$ is a generator matrix for $C^q_{n,k}$.  Since $q > 2k+1$, Lemma \ref{lem-parameters} implies that $C^q_{n,k}$ has dimension $\binom{n+k}{k}$.  We take the lexicographic ordering of the $\binom{n+k}{k}$ monomials of degree $k$ in $x_0,\ldots, x_n$. We write this ordered list of monomials as $m_1,\ldots, m_{\binom{n+k}{k}}$.  Note that $m_{\binom{n+k}{k}} = x_n^k$.  It is clear that $\ev(m_1),\ldots, \ev(m_{\binom{n+k}{k}})$ form an ordered basis for $C_{n,k}^q$.  These choices determine a generator matrix $G$.  We will prove that $GG^T$ has a single nonzero entry, so in particular, $\rank(GG^T) = 1$.

The $(i,j)$-entry of $GG^T$ comes from taking the sum of the evaluations of the monomial $m_i  m_j$ over the points of $\mathcal{P}'$.  Suppose $m_i = x_0^{a_0} \cdots x_n^{a_n}$ and $m_j = x_0^{b_0} \cdots x_n^{b_n}$ where $0\le a_0,\ldots, a_n, b_0,\ldots, b_n \le k$ and $\sum_{\ell = 0}^n a_\ell = \sum_{\ell = 0}^n b_\ell = k$.  Recall that $0^0 = 1$ denotes the empty product, so if $r$ is a nonnegative integer, then $0^r = 1$ when $r=0$ and $0^r = 0$ otherwise.

Our choice of $\mathcal{P}'$ implies that the $(i,j)$-entry of $GG^T$ is
\begin{eqnarray*}
\sum_{p \in \mathcal{P}'} (m_i m_j)(p) & = & \sum_{\alpha_1,\ldots, \alpha_{n} \in \F_q} 1^{a_0+b_0} \alpha_1^{a_1+b_1} \alpha_2^{a_2+b_2}\cdots \alpha_{n}^{a_{n}+b_{n}}   \\
& & + \sum_{\alpha_2,\ldots, \alpha_{n} \in \F_q} 0^{a_0+b_0} 1^{a_{1}+b_{1}}
\alpha_2^{a_2+b_2} \cdots \alpha_{n}^{a_{n}+b_{n}}  \\
& & + \cdots \\
& & + \sum_{\alpha_n\in \F_q}  \left(0^{a_0+b_0} \cdots 0^{a_{n-2}+b_{n-2}} \right)1^{a_{n-1}+b_{n-1}}  \alpha_n^{a_n+b_n} \\
& & + \left(0^{a_0+b_0}\cdots 0^{a_{n-1}+b_{n-1}}\right) 1^{a_n+b_n}.
\end{eqnarray*}
For any $i$, we have
\begin{eqnarray*}
& &  \sum_{\alpha_{i+1},\ldots, \alpha_{n} \in \F_q} \left( 0^{a_{0}+b_{0}} \cdots 0^{a_{i-1}+b_{i-1}}\right) 1^{a_{i}+b_{i}}
\alpha_{i+1}^{a_{i+1}+b_{i+1}}\cdots \alpha_{n}^{a_{n}+b_{n}}   \\
 & = &  \left( 0^{a_{0}+b_{0}} \cdots 0^{a_{i-1}+b_{i-1}}\right) 1^{a_{i}+b_{i}}
  \left(\sum_{\alpha_{i+1}\in \F_q}  \alpha_{i+1}^{a_{i+1}+b_{i+1}}\right) \cdots \left(\sum_{\alpha_n\in \F_q}  \alpha_n^{a_n+b_n}\right).
\end{eqnarray*}
Recall from Equation \eqref{eq-sum} that for any nonnegative integer $r < q-1$, we have $\sum_{\beta\in \F_q} \beta^r =0$.  For any $i,\ a_i + b_i \le 2k$.  Since $2k< q-1$, we see that
\[
 \sum_{p \in \mathcal{P}'} (m_i m_j)(p) = \left(0^{a_0+b_0}\cdots 0^{a_{n-1}+b_{n-1}}\right) 1^{a_n+b_n}  =
 \begin{cases}
 1 & \text{ if } m_i = m_j = x_n^k \\
 0 & \text{ otherwise}
 \end{cases}.
\]
We conclude that $GG^T$ has a single nonzero entry that is equal to $1$, so in particular this matrix has rank $1$.  The claim about the basis for $\Hull(C_{n,k}^q)$ follows from the fact that every entry of $GG^T$ except for the one in the lower-right corner is $0$.
\end{proof}

\begin{remark}
The assumption $q > 2k+1$ plays an important role in the proof of Theorem \ref{thm-q_large} as $\sum_{\beta\in \F_q} \beta^{q-1} = -1\neq 0$.
\end{remark}

\begin{cor}\label{cor-q_large}
Suppose that $n(q-1)- \frac{q-1}{2} < k \le n(q-1)$ and let $\ell = n(q-1) - k$. Then 
\[
\dim(\Hull(C^q_{n,k}))=\dim (C^q_{n,\ell})-1 =\binom{n+\ell}{\ell}-1.
\]
\end{cor}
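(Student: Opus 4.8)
The plan is to deduce this from Theorem~\ref{thm-q_large} using the elementary identity $\Hull(C) = \Hull(C^\perp)$, which holds for every linear code because $C \cap C^\perp = C^\perp \cap (C^\perp)^\perp$. Applied to $C_{n,k}^q$ this gives $\Hull(C_{n,k}^q) = \Hull({C_{n,k}^q}^\perp)$, and then S\o{}rensen's description of the dual (Theorem~\ref{thm-Sor}) lets us replace ${C_{n,k}^q}^\perp$ by $C_{n,\ell}^q$, at least when $k \not\equiv 0 \pmod{q-1}$, reducing everything to the regime already handled by Theorem~\ref{thm-q_large}.

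First I would record that the hypothesis $n(q-1) - \frac{q-1}{2} < k$ is equivalent to $\ell = n(q-1) - k < \frac{q-1}{2}$, and, since $\ell$ is a nonnegative integer, to $q > 2\ell + 1$. I would then split into two cases. If $k \not\equiv 0 \pmod{q-1}$, then $\ell \geq 1$ (otherwise $k = n(q-1) \equiv 0 \pmod{q-1}$), Theorem~\ref{thm-Sor}(i) gives ${C_{n,k}^q}^\perp = C_{n,\ell}^q$ exactly, hence $\Hull(C_{n,k}^q) = \Hull(C_{n,\ell}^q)$, and Theorem~\ref{thm-q_large} applied with $k$ replaced by $\ell$ (legitimate precisely because $q > 2\ell + 1$) yields $\dim(\Hull(C_{n,\ell}^q)) = \dim(C_{n,\ell}^q) - 1 = \binom{n+\ell}{\ell} - 1$, which is the asserted formula.

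The remaining case $k \equiv 0 \pmod{q-1}$ is where the only real subtlety lies, since Theorem~\ref{thm-Sor}(ii) would now force us to analyze the hull of $\Spa_{\F_q}\{\mathbf{1}, C_{n,\ell}^q\}$ rather than of a projective Reed-Muller code. The observation I would make is that this case is degenerate: writing $k = m(q-1)$ with $1 \leq m \leq n$, we get $\ell = (n-m)(q-1)$, and the constraint $\ell < \frac{q-1}{2} < q-1$ forces $m = n$, so $\ell = 0$ and $k = n(q-1)$. For these parameters the claimed equality reads $\dim(\Hull(C_{n,n(q-1)}^q)) = \binom{n}{0} - 1 = 0$, which is exactly the statement that $C_{n,n(q-1)}^q$ is LCD, already established in Remark~\ref{rem-no1}; moreover $\dim(C_{n,0}^q) - 1 = 0$ as well since $C_{n,0}^q = \Spa_{\F_q}\{\mathbf{1}\}$, so all three quantities in the statement agree. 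Beyond this bookkeeping there is no computation to carry out, so I do not expect a serious obstacle — the case split on $k \bmod (q-1)$, and in particular noticing that the congruence-zero case collapses to $\ell = 0$, is the only point requiring care.
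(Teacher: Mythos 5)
Your proposal is correct and follows essentially the same route as the paper: both pass to the dual via $\Hull(C) = \Hull(C^\perp)$, observe that the hypothesis forces either $k \not\equiv 0 \pmod{q-1}$ (where Theorem~\ref{thm-Sor}(i) and then Theorem~\ref{thm-q_large} with $q > 2\ell+1$ apply) or the degenerate case $k = n(q-1)$ (where the code is LCD by Remark~\ref{rem-no1}). Your extra bookkeeping verifying that the stated formula reads $0$ in the degenerate case is a nice touch but matches the paper's argument in substance.
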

\begin{proof}
For a linear code $C$, it is clear that $\Hull(C) = \Hull(C^\perp)$.  The assumption on $k$ implies that either $k = n(q-1)$, or $k \not\equiv 0 \pmod{q-1}$.  In the first case, ${C_{n,n(q-1)}^q}^\perp$ is the $1$-dimensional code spanned by ${\bf 1}$.  In Remark \ref{rem-no1} we note that ${\bf 1} \not\in C_{n,n(q-1)}^q$, so the hull is trivial in this case.  In the second case, Theorem \ref{thm-Sor}(i) implies that ${C_{n,k}^q}^\perp = C_{n,\ell}^q$.  We see that $1 \le \ell < \frac{q-1}{2}$, which implies $q > 2\ell+1$. Applying Theorem \ref{thm-q_large} completes the proof.
\end{proof}

With additional effort, the ideas that go into the proof of Theorem \ref{thm-q_large} can be adapted to compute the dimension of the hull of $C_{n,k}^q$ for a wider range of parameters.

\begin{theorem}\label{thm-k_to_qm1}
Let $\frac{q-1}{2} < k < q-1$.  Then 
\[
\dim(\Hull(C_{n,k}^q)) = \dim(C_{n,k}^q) - (2k+1 -(q-1)).
\]
Moreover, a basis for $\Hull(C_{n,k}^q)$ is given by $\{\ev(m)\}_{m\in \mathcal{M}}$ where $\mathcal{M}$ is the set of monomials of degree $k$ in $x_0,x_1,\ldots, x_n$ except for the monomials $x_{n-1}^{k-a} x_n^a$ where $q-1-k \le a \le k$.
\end{theorem}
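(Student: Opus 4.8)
The plan is to proceed exactly as in the proof of Theorem \ref{thm-q_large}, by choosing a convenient ordered basis of monomials for $C_{n,k}^q$, forming the corresponding generator matrix $G$, and computing $GG^T$ entry by entry using Equation \eqref{eq-sum}. Since $\frac{q-1}{2} < k < q-1$, Lemma \ref{lem-parameters} applies and $\dim(C_{n,k}^q) = \binom{n+k}{k}$, so the monomials of degree $k$ in $x_0,\ldots,x_n$, listed in lexicographic order, give an ordered basis; call them $m_1,\ldots,m_{\binom{n+k}{k}}$ and let $G$ be the resulting generator matrix. By Proposition \ref{prop-hull}, it suffices to show that $\rank(GG^T) = 2k+1-(q-1)$, and to identify which basis vectors $\ev(m)$ lie in the kernel of the map $c \mapsto cG^T$ restricted appropriately — equivalently, which of them lie in $\Hull(C_{n,k}^q)$.

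As in Theorem \ref{thm-q_large}, the $(i,j)$-entry of $GG^T$ is $\sum_{p\in\mathcal{P}'}(m_i m_j)(p)$, and using the standard affine representatives this telescopes into a sum of products $\prod_{s}\left(\sum_{\alpha_s\in\F_q}\alpha_s^{a_s+b_s}\right)$ over the "tail" variables, with a $0^{\,\cdot}$ or $1^{\,\cdot}$ prefactor for the leading variables. By Equation \eqref{eq-sum}, each factor vanishes unless the relevant exponent sum is a positive multiple of $q-1$. Here $a_s + b_s \le 2k < 2(q-1)$, so the only way a factor can be nonzero is if that exponent sum equals exactly $q-1$ (it cannot be $0$ for a genuine tail variable contributing a sum, and it cannot be $2(q-1)$). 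A short case analysis of the telescoping sum then shows that $\sum_{p\in\mathcal{P}'}(m_im_j)(p)$ is nonzero precisely when $m_im_j = x_{n-1}^{\,c}x_n^{\,d}$ with $c+d = 2k$ and $c,d\ge q-1-k$ wait — more precisely, when the product involves only $x_{n-1}$ and $x_n$ and the exponent of $x_n$ in $m_im_j$ is a multiple of $q-1$; one checks this forces $m_i, m_j \in \{x_{n-1}^{k-a}x_n^a : q-1-k \le a \le k\}$. Writing $t = 2k+1-(q-1) = |\{a : q-1-k \le a \le k\}|$, the matrix $GG^T$ is supported on the $t\times t$ submatrix indexed by these monomials, and on that submatrix the $(a,a')$-entry equals $\sum_{\alpha\in\F_q}\alpha^{(k-a)+(k-a')} \cdot \sum_{\beta\in\F_q}\beta^{a+a'}$, which is $1$ exactly when $a+a'\equiv 0 \pmod{q-1}$, i.e. when $a + a' = q-1$ (the range of $a+a'$ forces this), and $0$ otherwise.

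Thus the relevant $t\times t$ submatrix is (up to reindexing) an anti-diagonal matrix of $1$'s, hence has rank $t = 2k+1-(q-1)$, giving $\rank(GG^T) = 2k+1-(q-1)$ and the dimension formula via Proposition \ref{prop-hull}. For the basis statement: since every row and column of $GG^T$ outside the $t$ distinguished indices is zero, the $\binom{n+k}{k} - t$ vectors $\ev(m)$ with $m \notin \{x_{n-1}^{k-a}x_n^a : q-1-k \le a \le k\}$ lie in $\Hull(C_{n,k}^q)$ and are linearly independent; comparing with the dimension count shows they form a basis. The main obstacle I anticipate is the bookkeeping in the case analysis of the telescoping sum: one must be careful that a factor $\sum_{\alpha_s}\alpha_s^{a_s+b_s}$ with $a_s+b_s=0$ contributes $q = 0$ in $\F_q$ (so such terms vanish anyway), and that the "prefactor" $0^{a_0+b_0}\cdots$ forces all leading exponents to vanish, which is what pins the support down to the two variables $x_{n-1},x_n$; getting these boundary cases right — especially confirming that no monomial outside the distinguished set can pair with itself or another to give a nonzero entry — is the delicate part, but it is a direct extension of the argument already carried out for Theorem \ref{thm-q_large}.
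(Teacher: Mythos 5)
Your overall strategy is exactly the paper's: order the degree-$k$ monomials lexicographically, form $G$, and read off $\rank(GG^T)$ from Equation \eqref{eq-sum}. But two things in your computation of the entries are wrong as written, and one of them conceals a case that must actually be checked. First, for $m_i = x_{n-1}^{k-a}x_n^{a}$ and $m_j=x_{n-1}^{k-a'}x_n^{a'}$ you write the $(a,a')$-entry as $\left(\sum_{\alpha\in\F_q}\alpha^{(k-a)+(k-a')}\right)\left(\sum_{\beta\in\F_q}\beta^{a+a'}\right)$. That product is the contribution of the points whose leading $1$ sits in coordinate $n-2$, and it is identically zero: the two exponents sum to $2k<2(q-1)$, so they cannot both be positive multiples of $q-1$. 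The term that survives comes instead from the points $(0,\ldots,0,1,\alpha_n)$, where $x_{n-1}$ is evaluated at $1$ rather than summed over $\F_q$; it equals $1^{2k-a-a'}\sum_{\alpha_n\in\F_q}\alpha_n^{a+a'}$, which is $-1$ (not $1$) precisely when $a+a'=q-1$. Your subsequent evaluation silently treats the first factor as $1$, so you land on the right support, but the formula you actually wrote would give $GG^T=0$.

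Second, and more substantively, you have missed the contribution of the single point $(0,\ldots,0,1)$, which makes the $(x_n^k,x_n^k)$ entry equal to $1$ even though $2k\not\equiv 0\pmod{q-1}$; note that $x_n^k$ belongs to your distinguished set since $k\ge q-1-k$. So the $t\times t$ submatrix is not an antidiagonal matrix of $1$'s: it is an antidiagonal of $-1$'s together with one extra nonzero entry in the $(k,k)$ position. The conclusion survives --- the extra entry lies off the antidiagonal because $2k\neq q-1$, and since the only other nonzero entries in its row and column are the antidiagonal ones pairing $x_n^k$ with $x_{n-1}^{k-(q-1-k)}x_n^{q-1-k}$, the rank is still $t=2k+1-(q-1)$ --- but this is a step your proof must supply: as written you have no argument ruling out that the extra entry changes the rank. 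The paper's proof records this extra entry explicitly and disposes of it. Your identification of the basis of $\Hull(C_{n,k}^q)$ is unaffected, since the rows of $GG^T$ outside the distinguished set are zero in either description.
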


\begin{proof}
We follow the strategy from the proof of Theorem \ref{thm-q_large}.  Since $k < q$, Lemma \ref{lem-parameters} implies that $C^q_{n,k}$ has dimension $\binom{n+k}{k}$.  We take the lexicographic ordering of the $\binom{n+k}{k}$ monomials of degree $k$ in $x_0,\ldots, x_n$ and write this ordered list of monomials as $m_1,\ldots, m_{\binom{n+k}{k}}$.  It is clear that $\ev(m_1),\ldots, \ev(m_{\binom{n+k}{k}})$ form an ordered basis for $C_{n,k}^q$.  These choices determine a generator matrix $G$.  We will characterize the nonzero entries of $GG^T$. The form of this matrix will make it clear that it has rank $2k+1-(q-1)$, and moreover, will make it clear which monomials of degree $k$ correspond to codewords of $C_{n,k}^q$ that also lie in ${C_{n,k}^q}^\perp$.

Suppose $m_i = x_0^{a_0} \cdots x_n^{a_n}$ and $m_j = x_0^{b_0} \cdots x_n^{b_n}$ where $0\le a_0,\ldots, a_n, b_0,\ldots, b_n \le k$ and $\sum_{\ell = 0}^n a_\ell = \sum_{\ell = 0}^n b_\ell = k$. Following the reasoning from the proof of Theorem \ref{thm-q_large}, we see that the $(i,j)$-entry of $GG^T$ is given by
\[
\sum_{p \in \mathcal{P}'} (m_i m_j)(p)
=
\sum_{i=0}^n \left(
 \left( 0^{a_{0}+b_{0}} \cdots 0^{a_{i-1}+b_{i-1}}\right) 1^{a_{i}+b_{i}}
  \left(\sum_{\alpha_{i+1}\in \F_q}  \alpha_{i+1}^{a_{i+1}+b_{i+1}}\right) \cdots \left(\sum_{\alpha_n\in \F_q}  \alpha_n^{a_n+b_n}\right) \right).
\]
Equation \eqref{eq-sum} implies that $\sum_{\alpha \in \F_q} \alpha^{a+b} = 0$ unless $a+b$ is positive and $a+b \equiv 0 \pmod{q-1}$.  In particular, this is zero unless $a+b \ge q-1$.  Since $2k < 2(q-1)$, there is at most one $i$ for which $a_i + b_i$ is be both positive and divisible by $q-1$.  This implies that for any $i \le n-2$,
\[
 \left( 0^{a_{0}+b_{0}} \cdots 0^{a_{i-1}+b_{i-1}}\right) 1^{a_{i}+b_{i}}
  \left(\sum_{\alpha_{i+1}\in \F_q}  \alpha_{i+1}^{a_{i+1}+b_{i+1}}\right) \cdots \left(\sum_{\alpha_n\in \F_q}  \alpha_n^{a_n+b_n}\right) = 0.
\]
We see that
\[
\sum_{p \in \mathcal{P}'} (m_i m_j)(p)
=
\left(0^{a_{0}+b_{0}} \cdots 0^{a_{n-2}+b_{n-2}} \right)1^{a_{n-1}+b_{n-1}} \left(\sum_{\alpha_n\in \F_q}  \alpha_n^{a_n+b_n}\right)
+
0^{a_0+b_0} \cdots 0^{a_{n-1}+b_{n-1}} 1^{a_n+b_n}.
\]
As in the end of the proof of Theorem \ref{thm-q_large}, we have
\[
\left(0^{a_0+b_0}\cdots 0^{a_{n-1}+b_{n-1}}\right) 1^{a_n+b_n}  =
 \begin{cases}
 1 & \text{ if } m_i = m_j = x_n^k \\
 0 & \text{ otherwise}
 \end{cases}.
\]
We see that $GG^T$ has a $1$ in its lower-right corner, which corresponds to $m_i = m_j = x_n^k$.

We now assume that we are not in this special case, meaning that either $m_i$ or $m_j$ is not equal to $x_n^k$.  Therefore,
\[
\sum_{p \in \mathcal{P}'} (m_i m_j)(p)
=
\left(0^{a_{0}+b_{0}} \cdots 0^{a_{n-2}+b_{n-2}} \right) 1^{a_{n-1}+b_{n-1}} \left(\sum_{\alpha_n\in \F_q}  \alpha_n^{a_n+b_n}\right).
\]
This expression is zero unless $a_0,b_0, a_1, b_1,\ldots, a_{n-2}, b_{n-2} = 0$.  In order for this expression to be nonzero it is also necessary that $a_n + b_n = q-1$. Suppose $m_i = x_{n-1}^{k-a} x_n^{a}$ where $0 \le a \le k$.  Then $\sum_{p \in \mathcal{P}'} (m_i m_j)(p) = 0$ if $m_j$ is not equal to $x_{n-1}^{k-(q-1-a)} x_n^{q-1-a}$.  This is a monomial of degree $k$ if and only if $k - (q-1-a) \ge 0$, which is equivalent to $a \ge q-1-k$.

Each integer $a$ satisfying $(q-1)-k \le a \le k$ gives a nonzero entry of $GG^T$.  Together with the entry in the bottom-right corner of $GG^T$, this gives $2k-(q-1)+1 + 1$ nonzero entries of this matrix.  Because we used lexicographic order for the monomials of degree $k$ when defining $G$, the nonzero entries of $GG^T$ are contained in a submatrix in the lower-right corner of $GG^T$.  In this submatrix there are nonzero entries on the main antidiagonal, and then a single $1$ in the lower-right corner.  Since the last row and column of this matrix also contain nonzero entries corresponding to $\{m_i, m_j\} = \{x_n^k, x_{n-1}^{k-(q-1-k)} x_n^{q-1-k}\}$, we conclude that the rank of $GG^T$ is $2k-(q-1)+1$.

Every monomial $m$ that is not of the form $x_{n-1}^{k-a} x_n^a$ where $(q-1)-k \le a \le k$ has the property that the entire row corresponding to $m$ in $GG^T$ is zero.  The number of such monomials is the dimension of $\Hull(C_{n,k}^q)$, so these monomials form a basis for $\Hull(C_{n,k}^q)$.
\end{proof}

The proof of the following corollary follows the same strategy we used to deduce Corollary \ref{cor-q_large} from Theorem \ref{thm-q_large}.
\begin{cor}\label{cor-k_to_qm1}
Suppose that $(n-1)(q-1) < k < n(q-1)- \frac{q-1}{2}$ and let $\ell = n(q-1) - k$. Then 
\[
\dim(\Hull(C^q_{n,k})) = \dim (C^q_{n,\ell})- (2\ell+1 -(q-1)) =\binom{n+\ell}{\ell}-(2\ell+1 -(q-1)).
\]
\end{cor}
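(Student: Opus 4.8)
The plan is to reduce to Theorem~\ref{thm-k_to_qm1} exactly as Corollary~\ref{cor-q_large} was deduced from Theorem~\ref{thm-q_large}, using the identity $\Hull(C) = \Hull(C^\perp)$ valid for every linear code $C$. First I would translate the hypothesis on $k$ into a hypothesis on $\ell = n(q-1)-k$: the inequality $k > (n-1)(q-1)$ gives $\ell < q-1$, while $k < n(q-1) - \frac{q-1}{2}$ gives $\ell > \frac{q-1}{2}$. Hence $\frac{q-1}{2} < \ell < q-1$, which is precisely the range of exponents covered by Theorem~\ref{thm-k_to_qm1}.

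Next I would check that the dual of $C_{n,k}^q$ is the honest projective Reed-Muller code $C_{n,\ell}^q$, rather than the slightly larger span $\Spa_{\F_q}\{{\bf 1}, C_{n,\ell}^q\}$. Since $\frac{q-1}{2} < \ell < q-1$ we have $\ell \not\equiv 0 \pmod{q-1}$, and therefore $k = n(q-1) - \ell \not\equiv 0 \pmod{q-1}$ as well; moreover $1 \le k \le n(q-1)$ (indeed $k > (n-1)(q-1) \ge 0$ forces $k \ge 1$, and $k < n(q-1)-\frac{q-1}{2} < n(q-1)$), so Theorem~\ref{thm-Sor}(i) applies and gives ${C_{n,k}^q}^\perp = C_{n,\ell}^q$. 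Consequently $\Hull(C_{n,k}^q) = \Hull({C_{n,k}^q}^\perp) = \Hull(C_{n,\ell}^q)$.

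Now I would invoke Theorem~\ref{thm-k_to_qm1} with $k$ replaced by $\ell$: because $\frac{q-1}{2} < \ell < q-1$, it yields $\dim(\Hull(C_{n,\ell}^q)) = \dim(C_{n,\ell}^q) - (2\ell + 1 - (q-1))$ (and even an explicit monomial basis). Since $\ell < q-1 < q$, Lemma~\ref{lem-parameters} identifies $\dim(C_{n,\ell}^q) = \binom{n+\ell}{\ell}$, and combining the two displays gives the stated formula.

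I do not anticipate a genuine obstacle here; the only points requiring care are the translation of the two parameter inequalities into $\frac{q-1}{2} < \ell < q-1$ and the verification that $\ell \not\equiv 0 \pmod{q-1}$, which is exactly what guarantees that passing to the dual lands us on a projective Reed-Muller code so that Theorem~\ref{thm-k_to_qm1} applies verbatim. If one wanted, the monomial basis of $\Hull(C_{n,\ell}^q)$ from Theorem~\ref{thm-k_to_qm1} also serves as a basis of $\Hull(C_{n,k}^q)$ since the two hulls coincide, but this is not needed for the dimension count.
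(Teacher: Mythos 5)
Your proposal is correct and follows essentially the same route as the paper: translate the hypothesis on $k$ into $\frac{q-1}{2} < \ell < q-1$, use Theorem~\ref{thm-Sor}(i) to identify ${C_{n,k}^q}^\perp = C_{n,\ell}^q$, and apply Theorem~\ref{thm-k_to_qm1} together with $\Hull(C) = \Hull(C^\perp)$. The only difference is that you spell out the verification that $k \not\equiv 0 \pmod{q-1}$, which the paper leaves implicit.
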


\begin{proof}
By the assumption on $k$, Theorem \ref{thm-Sor}(i) implies that ${C_{n,k}^q}^\perp = C_{n,\ell}^q$.  We see that $\frac{q-1}{2} < \ell < q-1$.  Applying Theorem \ref{thm-k_to_qm1} completes the proof.
\end{proof}

We now describe how the results of this section give a different proof of Theorem \ref{RSJ312} of Ruano and San-Jos\'e.  First note that if $k \in \left\{\frac{q-1}{2}, q-1, \frac{3(q-1)}{2}\right\}$, then Theorem \ref{thm-so-1} and Corollary \ref{cor-dual_so} determine $\Hull(C_{2,k}^q)$.

Now suppose that $1 \le k \le 2(q-1)$ and $2k \not\equiv 0 \pmod{q-1}$. Theorems \ref{thm-q_large} and \ref{thm-k_to_qm1} together with Corollaries \ref{cor-q_large} and \ref{cor-k_to_qm1} determine $\dim(\Hull(C_{2,k}^q))$. When $\min\{k,q-k-1\} = k$, the formula at the end of Theorem \ref{RSJ312} is equal to the formula in Theorem \ref{thm-q_large}, $\binom{k+2}{2} - 1$.  When $\min\{k,q-k-1\} = q-1-k$, since
\[
\binom{k+1}{2} + q-1-k = \binom{k+2}{2} - (2k+1-(q-1)),
\]
the formula from Theorem \ref{RSJ312} is equal to the formula in Theorem \ref{thm-k_to_qm1}.

\begin{itemize}
\item In the case where $k < \frac{q-1}{2}$, we see how the basis described in Theorem \ref{RSJ312} consists of all monomials in $x_0, x_1, x_2$ of degree $k$ except $x_2^k$, matching the result of Theorem \ref{thm-q_large}.
\item In the case where $k$ satisfies $\frac{q-1}{2} < k < q-1$, we see that the basis described in Theorem \ref{RSJ312} consists of all monomials in $x_0, x_1, x_2$ of degree $k$ except those of the form $x_1^{k-a} x_2^{a}$ where $a \in \{q-1-k,q-k,\ldots, k\}$, which matches the set of monomials described in Theorem \ref{thm-k_to_qm1}.
\end{itemize}
This completes our proof of Theorem \ref{RSJ312}.

Ruano and San-Jos\'e not only describe a basis for $\Hull(C_{2,k}^q)$ for each $q$, but more generally, they describe a basis for $C_{2,k_1}^q \cap C_{2,k_2}^q$ where $1\le k_1< k_2 \le 2(q-1)$ \cite[Theorem 3.9]{RSJ1}.  Our results in this paper for $n=2$ do not recover this theorem because we focus only on $C_{2,k}^q \cap {C_{2,k}^q}^\perp$.  We briefly describe how one could try to prove this more general result using our techniques.  In place of Proposition \ref{prop-hull}, we would want to apply \cite[Theorem 2.1]{GGJT}.  This would allow us to determine a basis for $C_{2,k_1}^q \cap {C_{2,k_2}^q}^\perp$ in terms of the nonzero entries of $G_1 G_2^T$, where $G_1$ is a generator matrix for $C_{2,k_1}^q$ and $G_2$ is a generator matrix for $C_{2,k_2}^q$.  The entries of this matrix correspond to summing a monomial of degree $k_1+k_2$ over a collection of affine representatives of the points of $\PP^2(\F_q)$.  The difficult case would be when both $k_1$ and $k_2$ are large, for example, when $q\le k_1$.  This does match up with \cite[Theorem 3.9]{RSJ1} where there is a special case that did not arise in our analysis of $C_{2,k}^q \cap {C_{2,k}^q}^\perp$.  We do not attempt to carry out this argument here.

We close this section by summarizing the cases where the dimension of $\Hull(C_{n,k}^q)$ is not determined by the results of this paper.
\begin{itemize}
\item When $1 \le k < \frac{q-1}{2}$, Theorem \ref{thm-q_large} determines the dimension of $\Hull(C_{n,k}^q)$.

\item When $n(q-1) - \frac{q-1}{2} < k \le n(q-1)$, Corollary \ref{cor-q_large} determines the dimension of $\Hull(C_{n,k}^q)$.

\item When $\frac{q-1}{2} < k < q-1$, Theorem \ref{thm-k_to_qm1} determines the dimension of $\Hull(C_{n,k}^q)$.

\item When $(n-1)(q-1) < k < n(q-1) - \frac{q-1}{2}$, Corollary \ref{cor-k_to_qm1} determines the dimension of $\Hull(C_{n,k}^q)$.

\item When $1\le k \le \frac{n(q-1)}{2}$ and $2k \equiv 0 \pmod{q-1}$, Theorem \ref{thm-so-1} determines the dimension of $\Hull(C_{n,k}^q)$.

\item When $\frac{n(q-1)}{2} \le k \le n(q-1)$ and $2k \equiv 0 \pmod{q-1}$ but $k \not\equiv 0 \pmod{q-1}$, Corollary \ref{cor-dual_so} determines the dimension of $\Hull(C_{n,k}^q)$.

\end{itemize}
In all other cases with $1\le k < n(q-1)$, Theorem \ref{thm-so-1} implies that $C_{n,k}^q$ is not self-orthogonal and Theorem \ref{thm-LCD} implies that $C_{n,k}^q$ is not LCD.  Therefore, $1\le \dim(\Hull(C_{n,k}^q)) < \dim(C_{n,k}^q)$, but we do not know the precise value of $\dim(\Hull(C_{n,k}^q))$.

A natural next case to consider is when $k$ satisfies $q-1 < k < \frac{3(q-1)}{2}$.  It seems likely that one could carry out an analysis similar to the one given in the proof of Theorem \ref{thm-k_to_qm1}, but the details would be more complicated.  This is the kind of thing that would be necessary in order to prove an analogue of Theorem \ref{RSJ312} for projective Reed-Muller codes $C_{3,k}^q$ corresponding to surfaces in $\PP^3$.  Investigating this further could be an interesting problem for future work.

\subsection*{Acknowledgments}
We thank Diego Ruano and Rodrigo San-Jos\'e for helpful comments.  N. Kaplan was supported by NSF Grant DMS 2154223. J.-L. Kim was supported by the National Research Foundation of Korea (NRF) Grant funded by the Korean government (NRF-2019R1A2C1088676).


\begin{thebibliography}{00}

\bibitem{ACMLMRS} S. E. Anderson, E. Camps-Moreno, H. H. L\'opez, G. L. Matthews, D. Ruano, and I. Soprunov, Relative hulls and quantum codes, IEEE Trans. Inform. Theory 70 (2024), no. 5, 3190-3201. 

\bibitem{BeeDatGho2} P. Beelen, M. Datta, and S. Ghorpade, Maximum number of common zeros of homogeneous polynomials over finite fields.  Proc. Amer. Math. Soc. 146 (2018), no. 4, 1451-1468.

\bibitem{BeeDatGho} P. Beelen, M. Datta, and S. Ghorpade, A combinatorial approach to the number of solutions of systems of homogeneous polynomial equations over finite fields.  Mosc. Math. J. 22 (2022), no. 4, 565-593.

\bibitem{Ber} T. P. Berger, Automorphism groups of homogeneous and
projective Reed-Muller codes, IEEE Trans. Inform. Theory 48 (2002), no. 5, 1035-1045.

\bibitem{CheLinLiu} B. Chen, S. Ling, and H. Liu, Hulls of Reed-Solomon codes via algebraic geometry codes. IEEE Trans. Inform. Theory 69 (2023), no. 2, 1005-1014.

\bibitem{DelGoeMac} P. Delsarte, J. M, Goethals, and F. J. MacWilliams, On generalized Reed-Muller codes and their relatives, Inform. Contr., vol. 16, pp. 403-442, 1970.

\bibitem{Kim1} S. T. Dougherty, J.-L. Kim, B. Ozkaya, L. Sok, P. Sol\'e, The combinatorics of LCD codes:
Linear Programming bound and orthogonal matrices, Int. J. Inf. Coding Theory 4 116-128 (2017).


\bibitem{Elk} N. D. Elkies, Linear codes and algebraic geometry in higher dimensions. Preprint, 2006.

\bibitem{Roe1} L. Galvez, J.-L. Kim, N. Lee, Y. G. Roe and B. S. Won, Some bounds on binary LCD codes, Cryptogr. Commun. 10  719-728 (2018).

\bibitem{GaoYueHuaZha} Y. Gao, Q. Yue, X. Huang, and J. Zhang, Hulls of generalized Reed-Solomon codes via Goppa codes and their applications to quantum codes. IEEE Trans. Inform. Theory 67 (2021), no. 10, 6619-6626.

\bibitem{GhoLud} S. R. Ghorpade and R. Ludhani, On the minimum distance, minimum weight codewords, and the dimension of projective Reed-Muller codes, Adv. Math. Commun. 18 (2024), no. 2, 360-382.

\bibitem{Gra} M. Grassl,
Bounds on the minimum distance of linear codes and quantum codes,
Online available at \url{https://www.codetables.de}.
Accessed on 2024-06-04.

\bibitem{GGJT}
K. Guenda, T. A. Gulliver, S. Jitman, and S. Thipworawimon, Linear $\ell$-intersection pairs of codes and their applications. Des. Codes Cryptogr. 88 (2020), no. 1, 133-152.


\bibitem{GueJitGul}
K. Guenda, S. Jitman, T. A. Gulliver,
Constructions of good entanglement-assisted quantum error correcting codes,
Des. Codes and Cryptogr. 86, 121-136 (2018).

\bibitem{Pless1} W. Huffman, V. Pless, Fundamentals of Error-Correcting Codes, Cambridge University Press, Cambridge, 2003.

\bibitem{Kap1} N. Kaplan, Weight enumerators of Reed-Muller codes from cubic
curves and their duals, 16th International Conference
``Arithmetic, Geometry, Cryptography, and Coding Theory'', June 19-23, 2017, Contemporary Math., 722, 59-78.


\bibitem{KasLinPet} T. Kasami, S. Lin, and W. W. Peterson, Polynomial codes, IEEE Trans. Inform. Theory IT-14 (1968), 807-814.

\bibitem{Lac1} G. Lachaud, Projective Reed-Muller codes. In: Coding Theory and Applications (Cachan, 1986). Lecture Notes in Computer Science, vol. 311, pp. 125-129. Springer, Berlin (1988).


\bibitem{Lac2}  G. Lachaud, The parameters of projective Reed-Muller codes. Discrete Math. 81(2):217-221, 1990.

\bibitem{LeoPleSlo} J. S. Leon, V. Pless, N. J. A. Sloane,
Self-dual codes over $GF(5)$, J. Combin. Theory, Ser. A, 32, 178-194 (1982).

\bibitem{Macwilliams} F. MacWilliams and N. Sloane, The Theory of Error Correcting Codes, North-Holland, London, 1977.

\bibitem{Massey1} J. L. Massey, Reversible codes, Inf. Control 7 (3) 369-380 (1964).

\bibitem{Massey2} J. L. Massey, Linear codes with complementary duals, Discrete Math 106-107 337--342 (1992).


\bibitem{MerRol}  D.J. Mercier and R. Rolland, Polyn\^omes homog\`enes qui s'annulent sur l'espace projectif $\PP^n(\F_q)$. J. Pure Appl. Algebra 124 (1998), no. 1--3, 227-240.


\bibitem{Pless2} V. Pless, Introduction to the Theory of Error-Correcting Codes, John Wiley \& Sons, Inc., New York, 1998.

\bibitem{RSJ1} D. Ruano and R. San-Jos\'e, Hulls of projective Reed-Muller codes over the projective plane. (2024) 28 pp. \url{https://arxiv.org/abs/2312.13921}.

\bibitem{RSJ2} D. Ruano and R. San-Jos\'e, The hull variation problem for projective Reed-Muller codes and quantum error-correcting codes. (2024) 12 pp.\\
\url{https://arxiv.org/abs/2312.15308}.


\bibitem{Sor}  A. B. S\o{}rensen, Projective Reed-Muller codes. IEEE Trans. Inform. Theory, 37(6):1567-1576,
1991.

\bibitem{Sor2}  A. B. S\o{}rensen, A note on a gap in the proof of the minimum distance for Projective Reed-Muller codes. (2023) 4 pp. \url{https://arxiv.org/abs/2310.03574}.

\end{thebibliography}
\end{document}